\newtheorem{conjecture}{Conjecture}
\title{Log Diameter Rounds Algorithms for $2$-Vertex and $2$-Edge Connectivity} 
\titlerunning{Log Diameter Rounds Algorithms for $2$-Vertex and $2$-Edge Connectivity}
\author{Alexandr Andoni}{Columbia University}{andoni@cs.columbia.edu}{}{Research partly supported by NSF Grants (CCF-1617955 and CCF-1740833), Simons Foundation (\#491119) and Google Research Award.}
\author{Clifford Stein}{Columbia University}{cliff@cs.columbia.edu}{}{Research partly supported by NSF Grants  CCF-1714818 and CCF-1822809.}
\author{Peilin Zhong}{Columbia University}{pz2225@columbia.edu}{}{Research partly supported by NSF Grants (CCF-1703925, CCF-1421161, CCF-1714818, CCF-1617955 and CCF-1740833), Simons Foundation (\#491119) and Google Research Award.}
\authorrunning{A. Andoni, C. Stein and P. Zhong}
\keywords{parallel algorithms, biconnectivity, $2$-edge connectivity, the MPC model}
\newcommand{\wh}{\widehat}
\newcommand{\wt}{\widetilde}
\renewcommand{\varepsilon}{\epsilon}
\renewcommand{\tilde}{\wt}
\renewcommand{\hat}{\wh}
\DeclareMathOperator{\p}{par}
\DeclareMathOperator{\rank}{rank}
\DeclareMathOperator{\dist}{dist}
\DeclareMathOperator{\cyclen}{cyclen}
\DeclareMathOperator{\dep}{dep}
\DeclareMathOperator{\lef}{left}
\DeclareMathOperator{\rig}{right}
\DeclareMathOperator{\diam}{diam}
\DeclareMathOperator{\bidiam}{bi-diam}
\DeclareMathOperator{\col}{col}
\DeclareMathOperator{\lca}{lca}
\DeclareMathOperator{\bac}{lev}
\DeclareMathOperator{\rmq}{rmq}
\DeclareMathOperator{\child}{child}
\DeclareMathOperator{\leaves}{leaves}
\begin{document}

\maketitle

\begin{abstract}
Many modern parallel systems, such as MapReduce, Hadoop and Spark, can be modeled well by the MPC model.
The MPC model captures well coarse-grained computation on large data --- data is distributed to processors,
each of which has a sublinear (in the input data) amount of memory and we alternate between rounds of
computation and rounds of communication, where each machine can communicate an amount of data as large
as the size of its memory.  This model is stronger than the classical PRAM model, and it is an intriguing
question to design algorithms whose running time is smaller than in the PRAM model.

In this paper, we study two fundamental problems, $2$-edge connectivity and $2$-vertex connectivity (biconnectivity).
PRAM algorithms which run in $O(\log n)$ time have been known for many years.  We give algorithms using roughly log diameter
rounds in the MPC model.     
Our main results are, for an $n$-vertex, $m$-edge graph of diameter $D$ and bi-diameter $D'$, 1) a 
$O(\log D\log\log_{m/n} n)$ parallel time $2$-edge connectivity
algorithm, 2) a $O(\log D\log^2\log_{m/n}n+\log
D'\log\log_{m/n}n)$ parallel time biconnectivity algorithm, 
where the  bi-diameter $D'$ is the largest cycle length over all the vertex pairs in the same biconnected component.
Our results are fully scalable, meaning that 
the memory per processor can be $O(n^{\delta})$ for arbitrary constant $\delta>0$, and the total memory used is linear in the problem size. 
Our
$2$-edge connectivity algorithm achieves the same parallel time as the
connectivity algorithm of~\cite{asswz18}.  We also show an $\Omega(\log
D')$ conditional lower bound for the biconnectivity problem.

\end{abstract}

\section{Introduction}
The success of modern parallel and distributed systems such as
MapReduce~\cite{dg04,dg08}, Spark~\cite{zcfss10}, Hadoop~\cite{w12},
Dryad~\cite{ibybf07}, together with the need to solve problems on
massive data, is driving the development of new algorithms which are
more efficient and scalable in these large-scale systems.  An
important theoretical problem is to develop models which are good
abstractions of these computational frameworks.  The \emph{Massively
  Parallel Computation (MPC)}
model~\cite{ksv10,gsz11,bks13,anoy14,bks15,clmmos18,asswz18} 
captures the capabilities of these computational systems while keeping the
description of the model itself simple.  In the MPC model, there are
machines (processors), each with $\Theta(N^{\delta})$ local
memory, where $N$ denotes the size of the input and $\delta\in(0,1)$.
The computation proceeds in rounds, where each machine can perform
unlimited local computation in a round and exchange $O(N^{\delta})$
data at the end of the round.  The parallel time of an algorithm is
measured by the total number of computation-communication rounds.  The
MPC model is a variant of the Bulk Synchronous Parallel (BSP)
model~\cite{v90}.  It is also a more powerful model than the PRAM
since any PRAM algorithm can be simulated in the MPC model
~\cite{ksv10,gsz11} while some problem can be solved in a
faster parallel time in the MPC model.  For example, computing the XOR of
$N$ bits takes $O(1/\delta)$ parallel time in the MPC model but needs
near-logarithmic parallel time on the most powerful CRCW
PRAM~\cite{bh89}.

A natural question to ask is: which problems can be solved in 
faster parallel time in the MPC model than on a 
PRAM?  This question has been studied by a line of
recent
papers~\cite{ksv10,eim11,lmsv11,anoy14,ag18,ak17,ims17,clmmos18,asw18,bfu18,bdhk18,o18,fgu18}.
Most of these results studied the graph problems, which are the usual
benchmarks of parallel/distributed models.  Many graph problems such
as graph connectivity~\cite{sv80,r85,lt18}, graph
biconnectivity~\cite{tv84,tv85}, maximal matching~\cite{kuw86},
minimum spanning tree~\cite{kprs97} and maximal independent
set~\cite{l86,abi86} can be solved in the standard logarithmic time in
the PRAM model, but these  problems have been shown to have a better parallel
time in the MPC model.  

In addition, we hope to develop \emph{fully scalable} algorithms for
the graph problems, i.e., the algorithm should work for any constant
$\delta>0$.  The previous literatures show that a graph problem in the
MPC model with large local memory size may be much easier than the
same problem in the MPC model but with a smaller local memory size.
In particular, when the local memory size per machine is close to the
number of vertices $n$, many graph problems have efficient algorithms.
For example, if the local memory size per machine is $n/\log^{O(1)}n$,
the connectivity problem~\cite{asw18} and the approximate matching
problem~\cite{abbms19} can be solved in $O(\log\log n)$ parallel time.
If the local memory size per machine is $\Omega(n)$, then the MPC
model meets the congested clique model~\cite{bdh18}.  In this setting,
the connectivity problem and the minimum spanning tree problem can be
solved in $O(1)$ parallel time~\cite{jn18}.  If the local memory size
per machine is $n^{1+\Omega(1)}$, many graph problems such as maximal
matching, approximate weighted matchings, approximate vertex and edge
covers, minimum cuts, and the biconnectivity problem can be solved in $O(1)$ parallel
time~\cite{lmsv11,aflp12}.  The landscape of graph algorithms in the MPC
model with small local memory is more nuanced and challenging for algorithm
designers.  If the local
memory size per machine is $n^{1-\Omega(1)}$, then the best
connectivity algorithm takes parallel time $O(\log D\log\log n)$ where
$D$ is the diameter of the graph~\cite{asswz18}, and the best
approximate maximum matching algorithm takes parallel time
$\tilde{O}(\sqrt{\log n})$~\cite{o18}.

Therefore, the main open question is: which kind of the graph problems can have faster fully scalable MPC algorithms than the standard logarithmic PRAM algorithms?

Two fundamental graph problems in graph theory are $2$-edge
connectivity and $2$-vertex connectivity (biconnectivity).  In this
work, we studied these two problems in the MPC model.  Consider an
$n$-vertex, $m$-edge undirected graph $G$.  A bridge of $G$ is an edge
whose removal increases the number of connected components of $G$.  In
the $2$-edge connectivity problem, the goal is to find all the bridges
of $G$.  For any two different edges $e,e'$ of $G$, $e,e'$ are in the
same biconnected component (block) of $G$ if and only if there is a
simple cycle which contains both $e,e'$.  If we define a relation $R$ such
that $e R e'$ if and only if $e=e'$ or $e,e'$ are contained by a
simple cycle, then $R$ is an equivalence relation~\cite{d18}.  Thus, a
biconnected component is an induced graph of an equivalence class of
$R$.  In the biconnectivity problem, the goal is to output all the
biconnected components of $G$.  We proposed faster, fully scalable
algorithms for the both $2$-edge connectivity problem and the
biconnectivity problem by parameterizing the running time as a
function of the \emph{diameter} and the \emph{bi-diameter} of the
graph. The diameter $D$ of $G$ is the largest diameter of its
connected components.  The definition of bi-diameter is a natural
generalization of the definition of diameter.  If vertices $u,v$ are
in the same biconnected component, then the cycle length of $(u,v)$ is
defined as the minimum length of a simple cycle which contains both
$u$ and $v$. The bi-diameter $D'$ of $G$ is the largest cycle length
over all the vertex pairs $(u,v)$ where both $u$ and $v$ are in the
same biconnected component.  Our main results are 1) a fully scalable
$O(\log D\log\log_{m/n} n)$ parallel time $2$-edge connectivity
algorithm, 2) a fully scalable $O(\log D\log^2\log_{m/n}n+\log
D'\log\log_{m/n}n)$ parallel time biconnectivity algorithm.  Our
$2$-edge connectivity algorithm achieves the same parallel time as the
connectivity algorithm of~\cite{asswz18}.  We also show an $\Omega(\log
D')$ conditional lower bound for the biconnectivity problem.

\subsection{The Model}
Our model of computation is 
the Massively Parallel Computation (MPC) model~\cite{ksv10,gsz11,bks13}. 

Consider two non-negative parameters $\gamma\geq 0,\delta>0$.
In the $(\gamma,\delta)$-MPC model~\cite{asswz18}, there are $p$ machines (processors) each with local memory size $s$, where $p\cdot s=\Theta(N^{1+\gamma}),s=\Theta(N^{\delta})$ and $N$ denotes the size of the input data. 
Thus, the space per machine is sublinear in $N$, and the total space is only an $O(N^{\gamma})$ factor more than the input size.
In particular, if $\gamma=0$, the total space available in the system is linear in the input size $N$.
The space size is measured by words each containing $\Theta(\log(s\cdot p))$ bits.
Before the computation starts, the input data is distributed on $\Theta(N/s)$ input machines. 
The computation proceeds in rounds. 
In each round, each machine can perform local computation on its local data, and send messages to other machines at the end of the round.
In a round, the total size of messages sent/received by a machine should be bounded by its local memory size $s=\Theta(N^{\delta})$.
For example, a machine can send $s$ size $1$ messages to $s$ machines or send a size $s$ message to $1$ machine in a single round. 
However, it cannot broadcast a size $s$ message to every machine.
In the next round, each machine only holds the received messages in its local memory.
At the end of the computation, the output data is distributed on the output machines.
An algorithm in this model is called a $(\gamma,\delta)$-MPC algorithm. 
The parallel time of an algorithm is the total number of rounds needed to finish its computation.
In this paper, we consider $\delta$ an arbitrary constant in $(0,1)$.

\subsection{Our Results}
Our main results are efficient MPC algorithms for $2$-edge
connectivity and biconnectivity problems.
In our algorithms, one important subroutine is 
computing the Depth-First-Search (DFS) sequence~\cite{asswz18} which is
a variant of the Euler tour representation proposed
by~\cite{tv84,tv85} in 1984.  We show how to efficiently compute the
DFS sequence in the MPC model with linear total space.  Conditioned on
the hardness of the connectivity problem in the MPC model, we prove a
hardness result on the biconnectivity problem.

For $2$-edge connectivity and biconnectivity, the input is an undirected graph $G=(V,E)$ with $n=|V|$ vertices and $m=|E|$ edges. $N=n+m$ denotes the size of the representation of $G$, $D$ denotes the diameter of $G$, and $D'$ denotes the bi-diameter of $G$.
We state our results in the following. 

\noindent\textbf{Biconnectivity.} In the biconnectivity problem, we want to find all the biconnected components (blocks) of the input graph $G$.
Since the biconnected components of $G$ define a partition on $E$, we just need to color each edge, i.e., at the end of the computation, $\forall e\in E$, there is a unique tuple $(x,c)$ with $x=e$ stored on an output machine, where $c$ is called the color of $e$, such that the edges $e_1,e_2$ are in the same biconnected components if and only if they have the same color.

\begin{theorem}[Biconnectivity in MPC]\label{thm:final_biconn}
	For any $\gamma\in[0,2]$ and any constant $\delta\in(0,1)$, there is a randomized $(\gamma,\delta)$-MPC algorithm which outputs all the biconnected components of the graph $G$ in $O\left(\log D\cdot\log^2\frac{\log n}{\log(N^{1+\gamma}/n)}+\log D'\cdot\log\frac{\log n}{\log(N^{1+\gamma}/n)}\right)$ parallel time. The success probability is at least $0.95$. If the algorithm fails, then it returns FAIL.
\end{theorem}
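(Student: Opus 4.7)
The plan is to reduce biconnectivity to two invocations of the connectivity routine of~\cite{asswz18} via the Tarjan--Vishkin auxiliary-graph reduction, with every step implemented to respect the $O(N^\delta)$ memory-per-machine and $O(N^{1+\gamma})$ total-memory budget. Write $L := \log n / \log(N^{1+\gamma}/n)$, so the target runtime is $O(\log D \cdot \log^2 L + \log D' \cdot \log L)$ rounds.

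I would first run the connectivity algorithm of~\cite{asswz18} on $G$ to obtain a rooted spanning forest $T$, at cost $O(\log D \cdot \log L)$ rounds. Next, using the DFS-sequence construction mentioned as a subroutine in the introduction, I would preprocess $T$ to support parent pointers, preorder labels, ancestor tests, and $\lca$ queries in $O(1)$ further rounds per query. I budget $O(\log D \cdot \log^2 L)$ rounds for this preprocessing, because list-ranking on a tour of length $O(n)$ under linear total space and sublinear per-machine memory reduces to $O(\log L)$ applications of a connectivity-style contraction primitive stacked with the spanning-tree invocation. This is also the phase I expect to be the most delicate, since naive PRAM-style list ranking or tree contraction either costs $\Theta(\log n)$ rounds or blows past the linear total-space budget.

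With $T$ preprocessed, I would construct the Tarjan--Vishkin auxiliary graph $H$: its vertices are the tree edges of $T$, and for each non-tree edge $(u,v)$ of $G$ I add $O(1)$ edges to $H$ encoding the fundamental cycle of $(u,v)$, using $\lca(u,v)$ together with preorder and lowpoint comparisons. This keeps $|V(H)|+|E(H)|=O(N)$ and allows construction of $H$ in $O(1)$ further rounds. The classical Tarjan--Vishkin analysis guarantees that two tree edges lie in the same biconnected component of $G$ iff they lie in the same connected component of $H$. The key observation, where the bi-diameter enters, is that every component of $H$ corresponding to a biconnected block of $G$ has diameter $O(D')$: any two tree edges of such a block are covered by a simple cycle in $G$ of length at most $D'$, and that cycle decomposes into a sequence of fundamental cycles corresponding to a path of length $O(D')$ in $H$. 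Running the connectivity algorithm of~\cite{asswz18} on $H$ therefore takes $O(\log D' \cdot \log L)$ rounds. A final $O(\log L)$-round pass pushes component labels back from tree edges to non-tree edges and emits each bridge as a singleton block.

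Summing the four phases gives the claimed $O(\log D \cdot \log^2 L + \log D' \cdot \log L)$ bound, and a union bound over the two randomized connectivity calls --- each boosted to arbitrarily high constant success probability by independent repetition --- yields the $0.95$ success probability. As noted above, the main technical obstacle I anticipate is the tree-preprocessing phase: implementing a fully scalable, linear-total-space, $O(\log D \cdot \log^2 L)$-round DFS-sequence construction that simultaneously supports the ancestry and LCA queries required in Phase 3 is the central new subroutine on which the theorem rests.
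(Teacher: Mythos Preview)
Your high-level architecture---compute a spanning forest via~\cite{asswz18}, build a Tarjan--Vishkin auxiliary graph $H$, run connectivity again on $H$---is exactly the paper's approach. But you have misidentified where the extra $\log L$ factor (so the $\log D\cdot\log^2 L$ term) comes from, and the two errors happen to cancel in your final bookkeeping.

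First, the claim $\diam(H)=O(D')$ is false for the Tarjan--Vishkin construction. Take $G=K_n$ with a \emph{path} as the spanning tree. Every non-tree edge is an ancestor--descendant pair, so the only edges of $H$ are the ``chain'' edges $(v_j,\p(v_j))$ coming from the lowpoint test; $H$ is a path of length $n-2$ while $D'=3$. More generally, the argument ``a cycle of length $\le D'$ decomposes into $\le D'$ fundamental cycles, giving a path of length $O(D')$ in $H$'' breaks because each fundamental cycle $C_{e}$ contributes a path in $H$ of length up to $2\dep(T)$, not $O(1)$; summing gives $O(\dep(T)\cdot D')$, which is the bound the paper actually proves (Lemma~\ref{lem:biconnectivity}).

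Second, the DFS/Euler-tour preprocessing does not cost $O(\log D\cdot\log^2 L)$ rounds. The paper's DFS-sequence routine (Theorem~\ref{thm:DFS_sequence}) runs in $O(\log(\dep T))$ rounds, and since the spanning tree returned by~\cite{asswz18} has depth $\dep(T)\le D^{O(\log L)}$ (Theorem~\ref{thm:spanning_tree}), this is $O(\log D\cdot\log L)$, only one $\log L$ factor. The second $\log L$ enters precisely because $\dep(T)$ feeds into $\diam(H)$: one gets $\diam(H)\le O(\dep(T)\cdot D')\le D^{O(\log L)}\cdot D'$, so the second connectivity call costs $O\big((\log D\cdot\log L+\log D')\cdot\log L\big)=O(\log D\cdot\log^2 L+\log D'\cdot\log L)$. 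That, not list ranking, is the bottleneck.

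So the fix is: keep your reduction, drop the $\log^2 L$ budget on preprocessing, replace the $\diam(H)=O(D')$ claim with $\diam(H)=O(\dep(T)\cdot D')$, and track the $D^{O(\log L)}$ depth of the~\cite{asswz18} spanning tree through to the second connectivity call.
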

The worst case is when the input graph is sparse and the total space available is linear in the input size, i.e., $N=n+m=O(n)$ and $\gamma=0$. In this case, the parallel running time of our algorithm is $O(\log D\cdot \log^2\log n + \log D'\cdot \log\log n)$. If the graph is slightly denser ($m=n^{1+c}$ for some constant $c>0$), or the total space is slightly larger ($\gamma>0$ is a constant), then we obtain $O(\log D+\log D')$ time.

A cut vertex (articulation point) in the graph $G$ is a vertex whose removal increases the number of connected components of $G$.
Since a vertex $v$ is a cut vertex if and only if there are two edges $e_1,e_2$ which share the endpoint $v$ and $e_1,e_2$ are not in the same biconnected component, our algorithm can also find all the cut vertices of $G$.

\noindent\textbf{$2$-Edge connectivity.} In the $2$-edge connectivity
problem, we want to output all the bridges of the input graph $G$.  
Since an edge is a bridge if and only if
each of its endpoints is either a cut vertex or a vertex with degree
$1$, the $2$-edge connectivity problem should be easier than the
biconnectivity problem.  We show how to solve $2$-edge connectivity in
the same parallel time as the algorithm proposed by~\cite{asswz18} for
solving connectivity.

\begin{theorem}[$2$-Edge connectivity in MPC]\label{thm:two_edge_connectivity}
	For any $\gamma\in[0,2]$ and any constant $\delta\in(0,1)$, there is a randomized $(\gamma,\delta)$-MPC algorithm which outputs all the bridges of the graph $G$ in $O\left(\log D\cdot\log\frac{\log n}{\log(N^{1+\gamma}/n)}\right)$ parallel time. The success probability is at least $0.97$. If the algorithm fails, then it returns FAIL.
\end{theorem}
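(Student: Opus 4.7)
The plan is to reduce bridge identification to a short sequence of computations on a spanning forest of $G$, each of which fits within the target round complexity, by leveraging the spanning-forest algorithm of~\cite{asswz18} and the DFS-sequence subroutine advertised earlier in this paper. First, I would invoke the spanning-forest algorithm of~\cite{asswz18} to obtain a spanning forest $F$ of $G$ in $O\!\left(\log D\cdot\log(\log n/\log(N^{1+\gamma}/n))\right)$ rounds. Any non-tree edge of $F$ lies on a fundamental cycle formed with a tree path, so it cannot be a bridge; it therefore suffices to decide, for each tree edge, whether it is a bridge.

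Next, I would run the DFS-sequence subroutine to obtain, for every vertex $v$, an in-time $\mathrm{in}(v)$ and an out-time $\mathrm{out}(v)$ such that the rooted subtree $T_v$ is exactly $\{u:\mathrm{in}(v)\le\mathrm{in}(u)\le\mathrm{out}(v)\}$, together with a parent pointer $\mathrm{par}(v)$. This call costs at most the same $O\!\left(\log D\cdot\log(\log n/\log(N^{1+\gamma}/n))\right)$ rounds by the paper's DFS-sequence construction. The standard tree-edge characterization then applies: $(\mathrm{par}(v),v)$ is a bridge of $G$ if and only if no non-tree edge of $G$ has exactly one endpoint in $T_v$.

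To test this condition in parallel, I would use the classical random XOR trick. Assign each non-tree edge $e$ an independent uniform $\Theta(\log n)$-bit label $\ell(e)$, and for each vertex $u$ set $\sigma(u)=\bigoplus_{e\ni u,\,e\notin F}\ell(e)$. Then $\bigoplus_{u\in T_v}\sigma(u)$ equals the XOR of $\ell(e)$ over non-tree edges $e$ with exactly one endpoint in $T_v$, since every non-tree edge with both (respectively neither) endpoints in $T_v$ contributes its label an even (respectively zero) number of times. Hence the tree edge above $v$ is a bridge iff this XOR is zero, up to one-sided error $O(1/\mathrm{poly}(n))$ per vertex. Computing the $\sigma(u)$'s is a local aggregation at each vertex, and computing all subtree XORs reduces to sorting entries by $\mathrm{in}$-time, running a prefix-XOR pass over the DFS sequence, and combining $\mathrm{pref}(\mathrm{out}(v))\oplus\mathrm{pref}(\mathrm{in}(v)-1)$; both steps take $O(1/\delta)=O(1)$ rounds and fit inside the budget. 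A union bound over the $O(n)$ candidate tree edges and over the two subroutines yields overall success probability at least $0.97$ after tuning constants, and if any component fails we propagate FAIL.

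The principal obstacle is not the bridge identification, which is a light post-processing, but rather ensuring that the spanning-forest and DFS-sequence subroutines both meet the stated round complexity in the fully scalable regime. The former is precisely the main result of~\cite{asswz18} invoked as a black box, while the latter is the DFS-sequence algorithm developed earlier in this paper; once both are in hand, the rest of the argument is a straightforward parallel subtree aggregation and fits comfortably within $O\!\left(\log D\cdot\log(\log n/\log(N^{1+\gamma}/n))\right)$ rounds.
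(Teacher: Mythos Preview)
Your proposal is correct and reaches the target round bound, but the post-processing you use after the spanning forest and DFS sequence is genuinely different from the paper's. The paper does not use the random XOR-labeling trick at all. Instead, for each vertex $v$ it computes
\[
\bac(v)=\min\Bigl(\dep_{\p}(v),\ \min_{w\neq\p(v):(v,w)\in E}\dep_{\p}\bigl(\text{LCA of }(v,w)\bigr)\Bigr)
\]
via the LCA subroutine of Section~\ref{sec:LCA}, and then decides whether the tree edge $(v,\p(v))$ is a bridge by a range minimum query of $\bac(\cdot)$ over the segment of the DFS sequence between the first and last appearances of $v$ (Section~\ref{sec:rmq}). Thus the paper's bridge test is deterministic once the spanning tree and DFS sequence are in hand, at the cost of an additional $O(\log(\dep(\p)))$-round LCA pass and an $O(1)$-round RMQ pass.

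Your XOR approach trades away the LCA and RMQ machinery for an $O(1)$-round prefix-XOR, which is lighter and arguably more elementary; the price is an extra $O(1/\mathrm{poly}(n))$ one-sided error, which is negligible here. The paper's choice has the advantage that the same $\bac$ array and LCA infrastructure are reused verbatim in the biconnectivity algorithm (Section~\ref{sec:biconn}), so no new ingredient is introduced just for $2$-edge connectivity. Either route is fine for establishing Theorem~\ref{thm:two_edge_connectivity}; they diverge only after the spanning-forest and DFS-sequence calls, which are the steps that actually set the round complexity.
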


\noindent\textbf{DFS sequence.} 
A rooted tree with a vertex set $V$ can be represented by $n=|V|$ pairs $(v_1,\p(v_1)),(v_2,\p(v_2)),\cdots,(v_n,\p(v_n))$ where $\p:V\rightarrow V$ is a set of parent pointers, i.e.,   
for a non-root vertex $v$, $\p(v)$ denotes the parent of $v$, and for the root vertex $v$, $\p(v)=v$.
We show an algorithm which can compute the DFS sequence (Definition~\ref{def:dfs_sequence}) of the rooted tree in the MPC model with linear total space.

 \begin{theorem}[DFS sequence of a tree in MPC]\label{thm:DFS_sequence}
	Given a rooted tree represented by a set of parent pointers $\p:V\rightarrow V$, there is a randomized $(0,\delta)$-MPC algorithm which outputs the DFS sequence in $O(\log D)$ parallel time, where $\delta\in(0,1)$ is an arbitrary constant, $D$ is the depth of the tree. The success probability is at least $0.99$. If the algorithm fails, then it returns FAIL.
\end{theorem}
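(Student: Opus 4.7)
The plan is to compute, for every vertex $v$, its position $\pos(v)$ in the DFS sequence in $O(\log D)$ rounds, and then sort the pairs $(v,\pos(v))$ by position to produce the sequence; sorting $n$ items takes $O(1)$ rounds in the $(0,\delta)$-MPC model. First, in $O(1)$ rounds each non-root $v$ sends its ID to $\p(v)$, so every vertex learns its children set and fixes a canonical sibling order (say, by ID). This reduces the task to (i) computing subtree sizes and (ii) computing positions, each in $O(\log D)$ rounds with linear total memory.

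For the subtree sizes $s(v)=|\mathrm{subtree}(v)|$, a naive bottom-up accumulation costs $\Omega(D)$ rounds, so I would use a rake-and-compress scheme: in each round the resolved leaves are raked (their counts added to a parent's running total and the vertex deleted), and maximal degree-one chains are compressed via randomized pointer doubling. A chain of length $L$ collapses in $O(\log L)$ rounds, and balanced rake-and-compress resolves the whole tree in $O(\log D)$ rounds. Once every $s(v)$ is known, define the per-vertex increment
\[
\Delta(v) \;=\; 1 + \sum_{\substack{u\text{ sibling of }v \\ u\text{ ordered before }v}} c\cdot s(u),
\]
where $c\in\{1,2\}$ depends on whether the DFS sequence records entries only or entries and exits, so that $\pos(v) = \pos(\p(v)) + \Delta(v)$ with $\pos(\text{root})=0$. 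This is a prefix sum along the root-to-$v$ path, which I would compute by pointer doubling on parent pointers over $O(\log D)$ rounds, maintaining at round $t$ each vertex's ancestor at distance $2^t$ together with the accumulated $\Delta$ along the jump; at the end every $v$ knows $\pos(v)$.

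The main obstacle is reconciling $O(\log D)$ rounds with linear total memory. Naive bottom-up aggregation needs $\Omega(D)$ rounds, while naive global pointer doubling can cause each vertex to store information about a growing prefix of its ancestors and push the total memory past $O(n)$. The rake operation controls memory by physically eliminating finished vertices, and the compress operation, applied only on degree-one chains, keeps the per-round work $O(n)$; a careful interleaving of the two yields $O(\log D)$ rounds with linear total memory. The $0.99$ success probability then follows from Chernoff bounds on the randomized contraction steps (and, if desired, from a constant number of independent trials run in parallel and tested against the invariant $\sum_v 1 = \sum_{v\text{ a child of root}} s(v) + 1$ to detect and discard failures).
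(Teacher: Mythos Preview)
Your proposal has a genuine gap in the running-time analysis. The claim that ``balanced rake-and-compress resolves the whole tree in $O(\log D)$ rounds'' is not correct: classical tree contraction (rake and compress, in the sense of Miller--Reif) takes $\Theta(\log n)$ rounds, not $O(\log D)$ rounds. A clean counterexample is a complete binary tree of depth $D$, so $n=2^{D}-1$. No vertex ever has exactly one remaining child, so compress never applies; every round of rake peels off exactly one level of leaves, and the subtree size at the root is only determined after $D=\Theta(\log n)$ rounds. Your target is $O(\log D)=O(\log\log n)$ rounds here, which rake-and-compress does not deliver. The same issue arises for any bounded-degree tree of depth $D$: compress does nothing, and rake needs $\Omega(D)$ rounds. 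Thus your subtree-size step, which the rest of your plan (the $\Delta(v)$ increments and the root-to-$v$ prefix sums) relies on, is not established within the stated budget.

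A second, smaller point: the DFS sequence in this paper (Definition~\ref{def:dfs_sequence}) is an Euler-tour-style sequence of length $2n-1$ in which an internal vertex with $k$ children appears $k{+}1$ times; computing a single $\pos(v)$ per vertex is not the whole output. This is fixable once subtree sizes are known, but it is worth saying explicitly. For comparison, the paper does \emph{not} go through subtree sizes at all. It uses the leaf-sampling framework of~\cite{asswz18} (Theorem~\ref{thm:previous_leaf_sampling}): sample $\Theta(n^{\delta/3}\log n)$ leaves, compute all pairwise LCAs among them, sort the sampled leaves into DFS order, write down the root-to-leaf and LCA-to-leaf paths, and then recurse inside the untouched subtrees. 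The only missing pieces are LCA queries and multi-path generation in linear total space; the paper obtains these by first \emph{compressing} the tree to at most $n/\lceil\log D\rceil$ vertices (keeping every $t$-th level with $t=\lceil\log D\rceil$) and running the doubling tables on the compressed tree, so that the $O(|V'|\log D)$ space of doubling is $O(n)$. That compression idea is exactly what replaces the step where your argument breaks.
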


\noindent\textbf{Conditional hardness for biconnectivity.}  A
conjectured hardness for the connectivity problem is the \emph{one
  cycle vs. two cycles} conjecture: for any $\gamma\geq 0$ and any
constant $\delta\in(0,1)$, any $(\gamma,\delta)$-MPC algorithm
requires $\Omega(\log n)$ parallel time to determine whether the input
$n$-vertex graph is a single cycle or contains two disjoint length
$n/2$ cycles.  This conjectured hardness result is widely used in the
MPC literature~\cite{ksv10,bks13,klmrv14,rvw16,yv18}.  Under this
conjecture, we show that $\Omega(\log D')$ parallel time is necessary
for the biconnectivity problem, and this is true even when $D=O(1)$,
i.e., the diameter of the graph is a constant.

\begin{theorem}[Hardness of biconnectivity in MPC]
	For any $\gamma\geq 0$ and any constant $\delta\in(0,1)$, unless 
	there is a $(\gamma,\delta)$-MPC algorithm which can distinguish the following two instances: 1) a single cycle with $n$ vertices, 2) two disjoint cycles each contains $n/2$ vertices, in $o(\log n)$ parallel time, any $(\gamma,\delta)$-MPC algorithm requires $\Omega( \log D')$ parallel time for testing whether a graph $G$  with a constant diameter is biconnected.  
\end{theorem}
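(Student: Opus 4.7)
The plan is to prove the lower bound by a straightforward reduction: take an $n$-vertex instance $G$ of the one-cycle-vs-two-cycles problem and transform it into a constant-diameter graph $G^*$ whose biconnectivity answer reveals which case $G$ is in, while simultaneously ensuring $D'(G^*) = \Theta(n)$ so that $\log D' = \Omega(\log n)$.

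The construction I would use is to add a single auxiliary vertex $v^*$ to $G$ and connect $v^*$ to every original vertex. This augmentation is performed in $O(1)$ MPC rounds with linear total space, since the $n$ new edges $\{u,v^*\}$ can be generated locally on the machines holding the endpoints. In $G^*$ the diameter is at most $2$ (any two vertices meet at $v^*$), so $G^*$ has constant diameter regardless of the input. The key structural dichotomy is: (i) if $G$ is a single $n$-cycle, then $G^*$ is biconnected, because removing any non-$v^*$ vertex still leaves the rest connected through $v^*$, and removing $v^*$ leaves the original cycle intact; (ii) if $G$ consists of two disjoint $(n/2)$-cycles, then $v^*$ is a cut vertex of $G^*$ (its removal disconnects the two cycles), so $G^*$ is not biconnected. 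Hence any biconnectivity tester applied to $G^*$ distinguishes the two instances of the original problem.

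It remains to verify that $D'(G^*) = \Theta(n)$ in both cases, so that an $o(\log D')$-time algorithm is in fact $o(\log n)$. In Case (i), take two antipodal vertices $u,w$ on the original $n$-cycle; any simple cycle of $G^*$ containing both $u$ and $w$ can use $v^*$ at most once, so it must traverse at least one arc of the original cycle between $u$ and $w$, contributing length at least $n/2$; thus the cycle length of $(u,w)$ is at least $n/2+2$. In Case (ii), the analogous argument inside a single $(n/2)$-cycle gives cycle length $\Omega(n)$ for antipodal pairs of the same cycle, which lie in the same biconnected component. In both cases $\log D'(G^*) = \Omega(\log n)$. Combining: if a $(\gamma,\delta)$-MPC biconnectivity algorithm ran in $o(\log D')$ parallel time on constant-diameter graphs, then feeding it $G^*$ would solve the one-cycle-vs-two-cycles problem in $o(\log n)$ rounds, contradicting the assumed hardness. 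The reduction itself is conceptually simple; the only point requiring care is the bi-diameter calculation, which must rule out the possibility that the short paths through $v^*$ collapse $D'$ to $O(1)$ — this is why the argument uses antipodal pairs whose every simple cycle is forced to traverse a long arc of the original cycle.
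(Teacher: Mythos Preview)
Your proposal is correct and follows essentially the same reduction as the paper: add a universal vertex $v^*$ to the cycle(s) instance, observe that the resulting graph has diameter $2$, and that it is biconnected iff the original instance was a single cycle. The only difference is that you spell out the $D'=\Theta(n)$ calculation via antipodal vertices, whereas the paper simply asserts $\bidiam(G)=\Theta(n)$ in both cases without elaboration; your extra justification is sound.
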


\subsection{Our Techniques}

\noindent\textbf{Biconnectivity.} At a high level our biconnectivity algorithm is based on a framework proposed by~\cite{tv85}. 
The main idea is to construct a new graph and reduce the problem of finding biconnected components of $G$ to the problem of finding connected components of the new graph $G'$.
At first glance, it should be efficiently solved by the connectivity algorithm~\cite{asswz18}.
However, there are two main issues: 1) since the parallel time of the MPC connectivity algorithm of~\cite{asswz18} depends on the diameter of the input graph, we need to make the diameter of $G'$ small, 2) we need to construct $G'$ efficiently.
Let us first consider the first issue, and we will discuss the second issue later. 

We give an analysis of the diameter of $G'=(V',E')$ constructed by~\cite{tv85}. 
Without loss of generality, we  can suppose the input $G=(V,E)$ is connected.
Each vertex in $G'$ corresponds to an edge of $G$.
Let $T$ be an arbitrary spanning tree of $G$ with depth $d$.
Each non-tree edge $e$ can define a simple cycle $C_e$ which contains the edge $e$ and the unique path between the endpoints of $e$ in the tree $T$.
Thus, the length of $C_e$ is at most $2d+1$.
If there is a such cycle containing any two tree edges $(u,v),(v,w)$, vertices $(u,v),(v,w)$ are connected in $G'$. 
For each non-tree edge $e$, we connect the vertex $e$ to the vertex $e'$ in graph $G'$ where $e'$ is an arbitrary tree edge in the cycle $C_e$.
By the construction of $G'$, any $e,e'$ from the same connected components of $G'$ should be in the same biconnected components of $G$.
Now consider arbitrary two edges $e,e'$ in the same biconnected component of $G$.
There must be a simple cycle $C$ which contains both edges $e,e'$ in $G$.
Since all the simple cycles defined by the non-tree edges are a cycle basis of $G$~\cite{d18}, the edge set of $C$ can be represented by the xor sum of all the edge sets of $k$ basis cycles $C_{1},C_{2},\cdots,C_k$ where $C_i$ is a simple cycle defined by a non-tree edge $e_i$ on the cycle $C$. 
$k$ is upper bounded by the bi-diameter of $G$.
Furthermore, we can assume $C_{i}$ intersects $C_{i+1}$.
There should be a path between $e,e'$ in $G'$, and the length of the path is at most $\sum_{i=1}^k |C_i|\leq O(k\cdot d)$.
So, the diameter of $G'$ is upper bounded by $O(k\cdot d)$.
Thus, according to~\cite{asswz18}, we can find the connected components of $G'$ in $\sim(\log k+\log d)$ parallel time, where $d$ and $k$ are upper bounded by the diameter and the bi-diameter of $G$ respectively.

Now let us consider how to construct $G'$ efficiently.
The bottleneck is to determine whether the tree edges $(u,v),(v,w)$ should be connected in $G'$ or not.
Suppose $w$ is the parent of $v$ and $v$ is the parent of $u$.
The vertex $(u,v)$ should connect to the vertex $(v,w)$ in $G'$ if and only if there is a non-tree edge that connects a vertex $x$ in the subtree of $u$ and a vertex $y$ which is on the outside of the subtree of $v$.
For each vertex $x$, let $\bac(x)$ be the minimum depth of the least common ancestor (LCA) of $(x,y)$ over all the non-tree edges $(x,y)$.
Then $(u,v)$ should be connected to $(v,w)$ in $G'$ if and only if there is a vertex $x$ in the subtree of $u$ in $G$ such that $\bac(x)$ is smaller than the depth of $v$.
Since the vertices in a subtree should appear consecutively in the DFS sequence, this question can be solved by some range queries over the DFS sequence.
Next, we will discuss how to compute the DFS sequence of a tree.

\noindent\textbf{DFS sequence.} 
The DFS sequence of a tree is a variant of the Euler tour representation of the tree. 
For an $n$-vertex tree $T$,
\cite{tv85} gives an $O(\log n)$ parallel time PRAM algorithm for the Euler tour representation of $T$.
However, since their construction method will destroy the tree structure, it is hard to get a faster MPC algorithm based on this framework.
Instead, we follow the leaf sampling framework proposed by~\cite{asswz18}. 
Although the DFS sequence algorithm proposed by~\cite{asswz18} takes $O(\log d)$ time where $d$ is the depth of $T$, it needs $\Omega(n\log d)$ total space.
The bottleneck is the subroutine which needs to solve the least common ancestors problem and generate multiple path sequences.
The previous algorithm uses the doubling algorithm for the subroutine, i,e., for each vertex $v$, they store the $2^{i}$-th ancestor of $v$ for every $i\in[\lceil \log d \rceil]$. 
This is the reason why~\cite{asswz18} cannot achieve the linear total space.
We show how to compress the tree $T$ into a new tree $T'$ which only contains at most $n/\lceil\log d\rceil$ vertices. 
We argue that applying the doubling algorithm on $T'$ is sufficient for us to find the DFS sequence of $T$.

\noindent\textbf{$2$-Edge connectivity.} Without loss of generality, we can assume the input graph $G$ is connected.
Consider a rooted spanning tree $T$ and an edge $e=(u,v)$ in $G$.
Suppose the depth of $u$ is at least the depth of $v$ in $T$, i.e., $v$ cannot be a child of $u$.
The edge $e$ is not a bridge if and only if either $e$ is a non-tree edge or there is a non-tree edge $(x,y)$ connecting the subtree of $u$ and a vertex on the outside of the subtree of $u$.
Similarly, the second case can be solved by some range queries over the DFS sequence of $T$.

\noindent\textbf{Conditional hardness for biconnectivity.} We want to reduce the connectivity problem to the biconnectivity problem. 
For an undirected graph $G$, if we add an additional vertex $v^*$ and connects $v^*$ to every vertex of $G$, then the diameter of the resulting graph $G'$ is at most $2$ and each biconnected components of $G'$ corresponds to a connected component of $G$.
Furthermore, the bi-diameter of $G'$ is upper bounded by the diameter of $G$ plus $2$.
Therefore, if the parallel time of an algorithm $\mathcal{A}'$ for finding the biconnected components of $G'$ depends on the bi-diameter of $G'$, there exists an algorithm $\mathcal{A}$ which can find all the connected components of $G$ in the parallel time which has the same dependence on the diameter of $G$.


\subsection{A Roadmap}
The rest of this paper is organized as follows. Section~\ref{sec:preli} includes the notation and some useful definitions. 
Section~\ref{sec:two_edge_and_biconn} describes the offline algorithms for $2$-edge connectivity and biconnectivity.
It also includes the analysis of some crucial properties and the correctness of the algorithms.
In Section~\ref{sec:DFS_sequence_in_linear_total_space}, we show how to find the DFS sequence of a tree in the MPC model with linear total space.
Section~\ref{sec:mpc_biconnectivity} discusses the implementations of the $2$-edge connectivity algorithm and the biconnectivity algorithm in the MPC model.
Section~\ref{sec:hardness_of_biconnectivity} contains the conditional hardness result for the biconnectivity problem in the MPC model. 

\section{Preliminaries}\label{sec:preli}
We follow the notation of~\cite{asswz18}.
 $[n]$ denotes the set of integers $\{1,2,\cdots,n\}$.
 
\noindent\textbf{Diameter and bi-diameter.} Consider an undirected graph $G$ with a vertex set $V$ and an edge set $E$. For any two vertices $u,v$, we use $\dist_G(u,v)$ to denote the distance between $u$ and $v$ in graph $G$. If $u,v$ are not in the same (connected) component of $G$, then $\dist_G(u,v)=\infty$. The diameter $\diam(G)$ of $G$ is the largest diameter of its connected components, i.e., $\diam(G)=\max_{u,v\in V:\dist_G(u,v)\not=\infty} \dist_G(u,v)$. 
 $(v_1,v_2,\cdots,v_k)\in V^k$ is a cycle of length $k-1$ if $v_1=v_k$ and $\forall i\in[k-1],(v_i,v_{i+1})\in E$.
 We say a cycle $(v_1,v_2,\cdots,v_k)$ is simple if $k\geq 4$ and each vertex only appears once in the cycle except $v_1\ (v_k)$. 
 Consider two different vertices $u,v\in V$.
 We use $\cyclen_G(u,v)$ to denote the minimum length of a simple cycle which contains both vertices $u$ and $v$.
  If there is no simple cycle which contains both $u$ and $v$, $\cyclen_G(u,v)=\infty$.
  $\cyclen_G(u,u)$ is defined as $0$.
   The bi-diameter of $G$, $\bidiam(G)$, is defined as $\max_{u,v\in V:\cyclen_G(u,v)\not=\infty}\cyclen_G(u,v)$.

\noindent\textbf{Representation of a rooted forest.} Let $V$ denote a
set of vertices. We represent a rooted forest in the same manner as~\cite{asswz18}.
Consider a mapping $\p: V\rightarrow V$. For
$i\in\mathbb{N}_{>0}$ and $v\in V$, we define $\p^{(i)}(v)$ as
$\p(\p^{(i-1)}(v))$, and $\p^{(0)}(v)$ is defined as $v$ itself. If
$\forall v\in V,\exists i>0$ such that $\p^{(i)}(v)=\p^{(i+1)}(v)$,
then we call $\p$ a set of parent pointers on $V$. For $v\in V$, if
$\p(v)=v$, then we say $v$ is a root of $\p$. Notice that $\p$
actually can represent a rooted forest, thus $\p$ can have more than
one root. The depth of $v\in V$, $\dep_{\p(v)}$ is the smallest
$i\in\mathbb{N}$ such that $\p^{(i)}(v)$ is the same as
$\p^{(i+1)}(v)$. The root of $v\in V$, $\p^{(\infty)}(v)$ is defined
as $\p^{(\dep_{\p}(v))}(v)$. The depth of $\p,$ $\dep(\p)$ is defined
as $\max_{v\in V}\dep_{\p}(v)$.

\noindent\textbf{Ancestor and path.} For two vertices $u,v\in V$, if $\exists i\in\mathbb{N}$ such that $u=\p^{(i)}(v),$ then $u$ is an ancestor of $v$ (in $\p$).
If $u$ is an ancestor of $v$, then the path $P(v,u)$ (in $\p$) from $v$ to $u$ is a sequence $(v,\p(v),\p^{(2)}(v),\cdots,u)$ and the path $P(u,v)$ is the reverse of $P(v,u)$, i.e., $P(u,v)=(u,\cdots,\p^{(2)}(v),\p(v),v)$.
 If an ancestor $u$ of $v$ is also an ancestor of $w$, then $u$ is a common ancestor of $(v,w)$. Furthermore, if a common ancestor $u$ of $(v,w)$ satisfies $\dep_{\p}(u)\geq \dep_{\p}(x)$ for any common ancestor $x$ of $(v,w)$, then $u$ is the lowest common ancestor (LCA) of
$(v,w)$. 

\noindent\textbf{Children and leaves.} For any non-root vertex $u$ of $\p$, $u$ is a child of $\p(u)$. For any vertex $v\in V$, $\child_{\p}(v)$ denotes the set of all the children of $v$, i.e., $\child_{\p}(v)=\{u\in V\mid u\not=v,\p(u)=v\}.$ If $u$ is the $k^{\text{th}}$ smallest vertex in the set $\child_{\p}(v),$ then we define $\rank_{\p}(u)=k$, or in other words, $u$ is the $k^{\text{th}}$ child of $v$. If $v$ is a root vertex of $\p$, then $\rank_{\p}(v)$ is defined as $1$. $\child_{\p}(v,k)$ denotes the $k^{\text{th}}$ child of $v$.
For simplicity, if $\p$ is clear in the context, we just use $\child(v)$, $\rank(v)$ and $\child(v,k)$ to denote $\child_{\p}(v)$, $\rank_{\p}(v)$ and $\child_{\p}(v,k)$ for short. If $\child(v)=\emptyset$, then $v$ is a leaf of $\p$. We denote $\leaves(\p)$ as the set of all the leaves of $\p$, i.e., $\leaves(\p)=\{v\mid\child(v)=\emptyset\}$.

\subsection{Depth-First-Search Sequence}
The Euler tour representation of a tree is proposed by~\cite{tv84,tv85}. 
It is a crucial building block in many graph algorithms including biconnectivity algorithms.
The Depth-First-Search (DFS) sequence~\cite{asswz18} of a rooted tree is a variant of the Euler tour representation. Let us first introduce some relevant concepts of the DFS sequence.
\begin{definition}[Subtree~\cite{asswz18}]
	Consider a set of parent pointers $\p:V\rightarrow V$ on a vertex set $V$. Let $v$ be a vertex in $V$, and let $V'=\{u\in V\mid v\text{ is an ancestor of }u\}$. $\p':V'\rightarrow V'$ is a set of parent pointers on $V'$. If $\forall u\in V'\setminus \{v\}$, $\p'(u)=\p(u)$ and $\p'(v)=v$, then $\p'$ is a subtree of $v$ in $\p$. For $u\in V'$, we say $u$ is in the subtree of $v$.
\end{definition}
The definition of the DFS sequence is the following:
\begin{definition}[DFS sequence~\cite{asswz18}]\label{def:dfs_sequence}
	Consider a set of parent pointers $\p:V\rightarrow V$ on a vertex set $V$. Let $v$ be a vertex in $V$. If $v$ is a leaf in $\p$, then the DFS sequence of the subtree of $v$ is $(v)$. Otherwise, the DFS sequence of the subtree of $v$ is defined recursively as
	\begin{align*}
	(v,a_{1,1},a_{1,2},\cdots,a_{1,n_1},v,a_{2,1},a_{2,2},\cdots,a_{2,n_2},v,\cdots,a_{k,1},a_{k,2},\cdots,a_{k,n_k},v),
	\end{align*}
	where $k=|\child(v)|$ and $\forall i\in[k],$ $(a_{i,1},a_{i,2},\cdots,a_{i,n_i})$ is the DFS sequence of the subtree of $\child(v,i)$, i.e., the $i^{\text{th}}$ child of $v$.
\end{definition}
If $\p:V\rightarrow V$ has a unique root $v$, then we define the DFS sequence of $\p$ as the DFS sequence of the subtree of $v$. 
By the definition of the DFS sequence, for any two consecutive elements $a_i$ and $a_{i+1}$ in the sequence, $a_i$ is either a parent of $a_{i+1}$ or $a_i$ is a child of $a_{i+1}$. 
Furthermore, for any vertex $v$, if both elements $a_i$ and $a_j$ $(i<j)$ in the DFS sequence $A$ are $v$, any element $a_k$ between $a_i$ and $a_j$ (i.e., $i\leq k\leq j$) should be a vertex in the subtree of $v$. 

\subsection{Data Organization and Basic Algorithms in the MPC Model}\label{sec:data_organize}

We organize the data in the MPC model as in~\cite{asswz18}.

\noindent\textbf{Set.} Consider a set of $m$ items $S=\{x_1,x_2,\cdots,x_m\}$ where each $x_i$ can be described by a constant number of words. If $x\in S$ $\Leftrightarrow$ there is a unique machine which stores a pair $(``S",x)$ in its local memory, then the set $S$ is stored in the system. $``S"$ is the name of the set $S$ and can be represented by a constant number of words. Let $\mathcal{S}=\{S_1,S_2,\cdots,S_m\}$ be a family of sets, where $\forall i\in[m],S_i$ is stored in the system and the name of $S_i$ can be represented by a constant number of words. If $S\in \mathcal{S}$ $\Leftrightarrow$ there is a unique machine which stores a pair $(``\mathcal{S}",``S")$ in its local memory, then we say $\mathcal{S}$ is stored in the system. The  total space for storing $S$ is $\Theta(|S|)$.

An undirected graph $G$ can be represented by a pair of the sets $(V,E)$, where $V=\{v_1,v_2,\cdots,v_n\}$ denotes the set of the vertices and $E=\{(u_1,v_1),(u_2,v_2),\cdots,(u_m,v_m)\}\subseteq V\times V$ denotes the set of the edges. To store the graph $G$ in the system, we just need to store both $V$ and $E$ in the system.

\noindent\textbf{Mapping.} Consider a mapping $f:A\rightarrow B$ where $A,B$ are two finite sets and every element from $A$ or $B$ only requires a constant number of words to describe. 
Let $S=\{(a,b)\mid a\in A,b=f(a)\}$.
Then $S$ is a set representation of the mapping $f$, and the name of $S$ is $``f"$. If the set $S$ is stored in the system, then we say the mapping $f$ is stored in the system. The total space needed for storing $f$ is $\Theta(|A|)$.

A set of parent pointers on a vertex set $V$ can be regarded as a mapping $\p:V\rightarrow V$.

\noindent\textbf{Sequence.} Let $A=(a_1,a_2,\cdots,a_m)$ be a sequence of $m$ elements, where each element $a_i$ can be represented by a constant number of words. Let $S=\{(x_1,a_1),(x_2,a_2),\cdots,(x_m,a_m)\}$ where $x_1<x_2<\cdots<x_m\in \mathbb{R}$. 
Then $S$ is a set  representation of the sequence $A$, and the name of $S$ is $``A"$. If $S$ is stored in the system, then we say the sequence $A$ is stored in the system. The total space needed for storing $A$ is $\Theta(m)$.

\noindent\textbf{Basic MPC operations.} One of the most basic algorithm in the MPC model is sorting. 
\begin{theorem}[\cite{g99,gsz11}]\label{thm:MPC_sorting}
	Sorting can be solved in $c/\delta$ parallel time in the $(0,\delta)$-MPC model for any constant $\delta\in(0,1)$, where $c\geq 0$ is a universal constant.
\end{theorem}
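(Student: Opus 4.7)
The plan is to prove the $O(1/\delta)$-round sorting bound by a standard sample-sort / splitter-based recursion, where each level of the recursion shrinks the size of every subproblem by a polynomial factor $N^{\delta}$, so after only $O(1/\delta)$ levels each subproblem fits on a single machine and is solved by local sorting. Throughout, the $N$ input keys live on the $p=\Theta(N^{1-\delta})$ machines, each holding $\Theta(N^{\delta})$ of them.

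The key steps, in order, are: (i) \emph{Sampling.} Each machine independently selects $O(1)$ random keys from its local block and ships them to one designated \emph{splitter machine}; the total sample size is $\Theta(N^{1-\delta})\le N^{\delta}$, so it fits on one machine. (ii) \emph{Splitter selection.} The designated machine sorts the sample locally and picks $N^{\delta}-1$ evenly spaced splitters $s_1<s_2<\dots<s_{N^{\delta}-1}$. A standard Chernoff argument shows that with high probability every bucket $B_j=[s_{j-1},s_j)$ receives at most $O(N^{1-\delta}\log N)$ of the $N$ keys, which is $O(N^{\delta})$ when $\delta$ is a constant (by e.g.\ oversampling by a $\log N$ factor; this only affects the hidden constant). (iii) \emph{Broadcast.} The $N^{\delta}$ splitters are distributed to all $N^{1-\delta}$ machines by a $N^{\delta}$-ary broadcast tree: in each round a machine that already holds the splitters can forward them to $\lfloor N^{\delta}/N^{\delta}\rfloor=1$ new machine with a single length-$N^{\delta}$ message, so by doubling the number of holders we reach everyone in $O(\log_{N^{\delta}} N^{1-\delta})=O(1/\delta)$ rounds. (iv) \emph{Routing.} Each machine uses the splitters to tag its keys with bucket indices and sends them to the machines owning those buckets; since no bucket exceeds the local memory, this completes in $O(1)$ rounds. (v) \emph{Recursion / base case.} Each bucket now contains at most $N^{1-\delta+o(1)}$ keys stored on $O(N^{1-2\delta})$ machines. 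We recurse on each bucket in parallel. After $k$ levels the per-bucket size is at most $N^{1-k\delta}$ (ignoring polylog factors); once $1-k\delta\le\delta$, i.e.\ $k\ge(1-\delta)/\delta$, the entire bucket fits on one machine and is sorted by any local comparison sort. Finally, concatenating the sorted buckets in bucket order yields the global sorted sequence, which is already the correct data layout.

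The round count is immediate: each recursion level performs sampling, broadcast, and routing in $O(1/\delta)$ rounds, and there are $O(1/\delta)$ levels, giving $O(1/\delta^{2})$ total; a sharper accounting collapses the broadcast cost across levels (since the splitters at deeper levels are broadcast only within the machines assigned to a given bucket, at geometrically decreasing cost) and yields the claimed $c/\delta$ bound for a universal constant $c$. Correctness is clear from the invariant that after stage (iv) the keys are partitioned into globally ordered buckets, and each recursive call produces a sorted bucket.

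The main obstacle I expect is the load-balancing analysis in step (ii): one has to argue that, with the chosen oversampling rate, no bucket overflows a machine's memory \emph{with high probability} so that the routing in step (iv) is feasible in one round. This is handled by the classical sample-sort analysis (Chernoff bounds on the number of keys falling between two consecutive chosen splitters), but the bookkeeping of success probabilities across all $O(1/\delta)$ recursion levels and all buckets must be done carefully — one takes a union bound and absorbs it into the universal constant $c$. A secondary subtlety is the base-case assumption that an input of size $N^{\delta}$ can be sorted locally, which is legitimate since each machine has exactly $\Theta(N^{\delta})$ memory and unlimited local computation in the MPC model.
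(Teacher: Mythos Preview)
The paper does not actually prove this theorem; it is simply quoted from \cite{g99,gsz11} as a black box. So there is no ``paper's own proof'' to compare against, and any reasonable argument would do. Your sketch, however, has real gaps that would prevent it from going through as written.

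First, in step (i) you assert that the $\Theta(N^{1-\delta})$ samples collected (one per machine) fit on a single splitter machine because $\Theta(N^{1-\delta})\le N^{\delta}$. This inequality holds only when $\delta\ge 1/2$; for small constant $\delta$ (the regime where the $c/\delta$ bound is interesting) the sample is polynomially larger than any machine's memory, so you cannot gather it on one machine. Second, in step (iii) you compute the broadcast cost as $O(\log_{N^{\delta}} N^{1-\delta})=O(1/\delta)$, which would require each holder to fan the splitter list out to $N^{\delta}$ new machines in a single round. But the splitter list itself already has size $\Theta(N^{\delta})$, so a machine can forward at most one copy per round, and the doubling you describe yields $\Theta((1-\delta)\log N)$ rounds, not $O(1/\delta)$. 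These two issues are exactly why the $O(1/\delta)$-round algorithms in \cite{g99,gsz11} are \emph{not} a straightforward sample-sort: they rely on deterministic sorting-network or BSP simulations that avoid any single ``gather all samples / broadcast all splitters'' step. Your final remark that the naive $O(1/\delta^{2})$ count ``collapses'' to $O(1/\delta)$ by sharper accounting does not repair this, because the $\Theta(\log N)$ broadcast cost already occurs at the top level of the recursion.

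If you want a self-contained argument, the cleanest route is to simulate an $O(\log N)$-depth sorting network (e.g.\ AKS, or simply cite a butterfly/column-sort variant) by grouping $\Theta(\delta\log N)$ consecutive network layers into one MPC round, using the $\Theta(N^{\delta})$ local memory to hold all wires reachable within that many layers; this gives $O((\log N)/(\delta\log N))=O(1/\delta)$ rounds directly and is essentially the content of \cite{g99,gsz11}.
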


For any $\delta'\geq \delta,$ $O(n^{\delta'-\delta})$ number of machines with $\Theta(n^{\delta})$ local memory can always be simulated by $O(1)$ number of machines with $\Theta(n^{\delta'})$ local memory. Therefore, if an algorithm can solve a problem in $(\gamma,\delta)$-MPC model in $R(n)$ rounds, then the such algorithm can be simulated in $(\gamma',\delta')$-MPC model in $O(R(n))$ rounds for any $\gamma'\geq \gamma,\delta'\geq \delta$. 
Thus, for any $\gamma\geq 0$ and any constant $\delta\in(0,1)$, sorting takes $O(1)$ parallel time in the $(\gamma,\delta)$-MPC model.


Sorting is an important tool to build the MPC subroutines. One such MPC subroutine is to handle multiple queries at the same time. Roughly speaking, a random access shared memory can be simulated in the MPC model. Suppose there are $k$ sets $S_1,S_2,\cdots,S_k$ stored in the system, and the $t$ of them are set representations of mappings $f_1:A_1\rightarrow B_1,f_2:A_2\rightarrow B_2,\cdots,f_t:A_t\rightarrow B_t$. Suppose each machine has several queries where each query requires the value $f_i(a)$ for some $i\in[t],a\in A_i$. All the queries can be simultaneously handled in constant parallel time in the $(0,\delta)$-MPC model for any constant $\delta\in(0,1)$. 
For more basic MPC operations, we refer readers to~\cite{asswz18}.

\section{$2$-Edge Connectivity and Biconnectivity}\label{sec:two_edge_and_biconn}
Consider a connected undirected graph $G$ with a vertex set $V$ and an edge set $E$. In the $2$-edge connectivity problem, the goal is to find all the bridges of $G$, where an edge $e\in E$ is called a bridge if its removal disconnects $G$. In the biconnectivity problem, the goal is to partition the edges into several groups $E_1,E_2,\cdots,E_k$, i.e., $E=\bigcup_{i=1}^k E_i,\forall i\not=j,E_i\cap E_j = \emptyset$, such that $\forall e\not =e'\in E$, $e$ and $e'$ are in the same group if and only if there is a simple cycle in $G$ which contains both $e$ and $e'$. A subgraph induced by an edge group $E_i$ is called a biconnected component (block). In other words, the goal of the biconnectivity problem is to find all the blocks of $G$.

In this section, we describe the algorithms for both the $2$-edge connectivity problem and the biconnectivity problem in the offline setting. In Section~\ref{sec:mpc_biconnectivity}, we will discuss how to implement them in the MPC model.

\subsection{$2$-Edge Connectivity}\label{sec:two_edge_connectivity}
The $2$-edge connectivity problem is much simpler than the biconnectivity problem. 
We first compute a spanning tree of the graph. Only a tree edge can be a bridge. Then for any non-root vertex $v$, if  there is no non-tree edge which crosses between the subtree of $v$ and the outside of the subtree of $v$, then the tree edge which connects $v$ to its parent is a bridge. 
\begin{figure}
\noindent\fbox{
	\begin{minipage}{\linewidth}
		$2$-Edge Connectivity Algorithm:
		\begin{itemize}
			\item \textbf{Input}:
			\begin{itemize}
				
				\item A connected undirected graph $G=(V,E)$.
				
			\end{itemize}
			
			\item \textbf{Output}:
			\begin{itemize}
				
				\item A subset of edges $B\subseteq E$.
				
			\end{itemize}
			
			\item \textbf{Finding bridges} (\textsc{Bridges}$(G=(V,E))$ ): 
			\begin{enumerate}
				\item Compute a rooted spanning tree of $G$. The spanning tree is represented by a set of parent pointers $\p:V\rightarrow V$. 
				\item Compute $\bac:V\rightarrow \mathbb{Z}_{\geq 0}$: for 
				each $v\in V,$ 
				$$\bac(v)\gets \min\left(\dep_{\p}(v), \min_{w\in V\setminus\{\p(v)\}: (v,w)\in E}\dep_{\p}(\text{the LCA of }(v,w))\right).$$\label{sta:bridge_set_bac}
				\item Compute the DFS sequence $A$ of $\p$.\label{sta:two_edge_dfs_sequence}
				\item Initialize $B\gets \emptyset.$ For each non-root vertex $v$, let $a_i,a_j$ be the first and the last appearance of $v$ in $A$ respectively. If $\min_{k:i\leq k\leq j}\bac(a_k)\geq \dep_{\p}(v)$, $B\leftarrow B\cup\{(v,\p(v))\}$. Output $B$. \label{sta:final_output_bridge}
			\end{enumerate}
		\end{itemize}
	\end{minipage}
}
\end{figure}
\begin{lemma}[$2$-Edge connectivity]\label{lem:two_edge_connect}
	Consider an undirected graph $G=(V,E)$. Let $B$ be the output of \textsc{Bridges}$(G)$. Then $B$ is the set of all the bridges of $G$.
\end{lemma}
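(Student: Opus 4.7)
The plan is to show the two directions: every tree edge $(v,\p(v))$ added to $B$ is a bridge, and every bridge of $G$ ends up in $B$. The non-tree edges obviously cannot be bridges (a non-tree edge $e$ lies on a cycle formed with the tree path between its endpoints, so its removal leaves $G$ connected), hence we may restrict attention to tree edges from the outset. For a non-root $v$, removing the tree edge $(v,\p(v))$ splits the spanning tree into the subtree $T_v$ of $v$ and the remainder $V\setminus T_v$; thus $(v,\p(v))$ is a bridge iff no edge of $E$ other than $(v,\p(v))$ itself has one endpoint in $T_v$ and the other in $V\setminus T_v$. I will therefore aim to prove that the test in Step~\ref{sta:final_output_bridge} exactly detects the absence of such a \textit{crossing} edge.

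The key lemma will relate $\bac$ to crossing edges. I will prove that for every $x\in T_v$, the quantity $\bac(x)<\dep_\p(v)$ iff there is an edge $(x,w)\in E$ with $w\neq\p(x)$ whose other endpoint $w$ lies outside $T_v$. For the forward direction, note that for $x\in T_v$ we have $\dep_\p(x)\geq\dep_\p(v)$, so the only way for $\bac(x)$ to drop below $\dep_\p(v)$ is via some neighbor $w\neq\p(x)$ whose LCA with $x$ is a strict ancestor of $v$; such an LCA cannot lie in $T_v$, which forces $w\notin T_v$. For the converse, if $(x,w)$ exits $T_v$ then $v$ is an ancestor of $x$ but not of $w$, so $\mathrm{LCA}(x,w)$ is a strict ancestor of $v$ and contributes a depth $<\dep_\p(v)$ to $\bac(x)$. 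A small check is needed to confirm that the excluded edge $(x,\p(x))$ never interferes: if $x=v$, then $\p(x)=\p(v)\notin T_v$, which is precisely the tree edge we are testing and must be excluded; if $x\neq v$ in $T_v$, then $\p(x)\in T_v$, so the excluded edge is non-crossing and its omission is harmless. Tree edges from $x$ to its children contribute LCA $=x$ of depth $\geq\dep_\p(v)$, so they cannot cause a false positive either.

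Next I will combine this with the DFS sequence. By the comment following Definition~\ref{def:dfs_sequence}, the set of elements of $A$ strictly between (and including) the first and last occurrences of $v$ is exactly the multiset of vertices of $T_v$. Therefore
\[
\min_{i\le k\le j}\bac(a_k)\;=\;\min_{x\in T_v}\bac(x),
\]
and by the characterization above this minimum is $\geq\dep_\p(v)$ iff no edge of $G$ (other than $(v,\p(v))$) crosses out of $T_v$, which is exactly the condition that $(v,\p(v))$ is a bridge. This yields both inclusions: if $(v,\p(v))$ is a bridge it passes the test, and conversely any tree edge passing the test has no crossing edge and is thus a bridge; since no non-tree edge is ever placed in $B$, the equality $B=\{\text{bridges of }G\}$ follows.

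The main obstacle I anticipate is the case analysis around the excluded neighbor $\p(x)$, specifically making sure that the exclusion does not accidentally hide a true crossing edge (when $x=v$) nor add a spurious one (when $x$ is a proper descendant of $v$), together with verifying that the $\dep_\p(v)$ term inside the $\min$ defining $\bac$ does not spuriously pull $\bac(v)$ below $\dep_\p(v)$. Beyond that, the proof is a direct composition of the tree-edge observation, the LCA-based characterization of crossing edges, and the consecutive-block property of the DFS sequence.
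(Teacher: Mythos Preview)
Your proposal is correct and follows essentially the same approach as the paper's proof: both restrict to tree edges, characterize a bridge $(v,\p(v))$ by the absence of edges crossing out of the subtree $T_v$, use the LCA-depth condition encoded in $\bac$ to detect such crossings, and invoke the DFS-sequence property that the positions between the first and last occurrence of $v$ are exactly the vertices of $T_v$. Your write-up is in fact more careful than the paper's about the edge case where the excluded neighbor $\p(x)$ could itself be a crossing edge (namely at $x=v$), which the paper glosses over.
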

\begin{proof}
Suppose $(u,v)\in E$ is not a bridge. If $(u,v)$ is a non-tree edge in $\p$, then since $B$ only contains tree edges, $(u,v)\not\in B$. Otherwise, suppose $\p(v)=u$. There must be a non-tree edge $(x,y)\in E$ such that $x$ is in the subtree of $v$ but $y$ is not in the subtree of $v$. Thus, the LCA of $(x,y)$ is not $v$, and it is an ancestor of $v$ which means that the depth of the LCA of $(x,y)$ is smaller than $\dep_{\p}(v)$. By step~\ref{sta:bridge_set_bac}, we have $\bac(x)<\dep_{\p}(v)$. 
Let $a_i,a_j$ be the first and the last appearance of $v$ in the DFS sequence of $\p$. Since $x$ is in the subtree of $v$, there exists $k\in \{i,i+1,\cdots,j\}$ such that $v=a_k$. By step~\ref{sta:final_output_bridge}, since $\bac(a_k)<\dep_{\p}(v)$, $(u,v)\not\in B$.

If $(u,v)\in E$ is a bridge. Then $(u,v)$ must be a tree edge in $\p$, i.e., either $\p(u)=v$ or $\p(v)=u$. Suppose $\p(v)=u$. Then for any non-tree edge $(x,y)$ with $x$ in the subtree of $v$, $y$ must also be in the subtree of $v$. Thus, the depth of the LCA of $(x,y)$ should be at least $\dep_{\p}(v)$. By step~\ref{sta:bridge_set_bac}, for any $x$ in the subtree of $v$, we have $\bac(x)\geq \dep_{\p}(v)$. Let $a_i,a_j$ be the first and the last appearance of $v$ in the DFS sequence of $\p$.
Since all the vertices $a_i,a_{i+1},\cdots,a_j$ are in the subtree of $v$, we have $(u,v)\in B$ by step~\ref{sta:final_output_bridge}.
\end{proof}

\subsection{Biconnectivity}\label{sec:biconn}
In this section, we will show a biconnectivity algorithm. 
It is a modification of the algorithm proposed by~\cite{tv85}. 
The high level idea is to construct a new graph $G'$ based on the input graph $G$, and reduce the biconnectivity problem of $G$ to the connectivity problem of $G'$.
Since the running time of the connectivity algorithm~\cite{asswz18} depends on the diameter of the graph, we also give an analysis of the diameter of the graph $G'$.
\begin{figure}
	\noindent\fbox{
		\begin{minipage}{\linewidth}
			Biconnectivity Algorithm:
			\begin{itemize}
				\item \textbf{Input}:
				\begin{itemize}
					
					\item A connected undirected graph $G=(V,E)$.
					
				\end{itemize}
				
				\item \textbf{Output}:
				\begin{itemize}
					
					\item A coloring $\col: E\rightarrow V$ of the edges.
					
				\end{itemize}
				
				\item \textbf{Finding blocks} (\textsc{Biconn}$(G=(V,E))$ ): 
				\begin{enumerate}
					\item Compute a rooted spanning tree of $G$. The spanning tree is represented by a set of parent pointers $\p:V\rightarrow V$. 
					\item Compute $\bac:V\rightarrow \mathbb{Z}_{\geq 0}$: for 
					each $v\in V,$ 
					$$\bac(v)\gets \min\left(\dep_{\p}(v), \min_{w\in V\setminus\{\p(v)\}: (v,w)\in E}\dep_{\p}(\text{the LCA of }(v,w))\right).$$\label{sta:block_set_bac}
					\item Compute the DFS sequence $A$ of $\p$.
					\item Let $r$ be the root of $\p$. Initialize $V'\gets V\setminus\{r\},E'\gets\emptyset$.
					\item For each $v\in V'$, let $a_i,a_j$ be the first and the last appearance of $v$ in $A$ respectively. If $\min_{k\in\{i,i+1,\cdots,j\}}\bac(a_k)< \dep_{\p}(\p(v))$, $E'\leftarrow E'\cup\{(v,\p(v))\}$. \label{sta:non_tree_edge_ancestor}
					\item For each $(u,v)\in E$, if neither $u$ nor $v$ is the LCA of $(u,v)$ in $\p$, $E'\gets E'\cup\{(u,v)\}$. \label{sta:non_tree_edge_non_ancestor}
					\item Compute the connected components of $G'=(V',E')$. Let $\col':V'\rightarrow V'$ be the coloring of the vertices in $V'$ such that $\forall u',v'\in V'$, $u',v'$ are in the same connected component in $G'$ $\Leftrightarrow$ $\col'(u')=\col'(v')$. \label{sta:connected_components_Gprime}
					\item Initialize $\col:E\rightarrow V$. For each $e=(u,v)\in E$, if $\dep_{\p}(u)\geq \dep_{\p}(v)$, set $\col(e)\gets \col'(u)$; otherwise, set $\col(e)\gets \col'(v)$. Output $\col:E\rightarrow V$. \label{sta:final_assignment}
				\end{enumerate}
			\end{itemize}
		\end{minipage}
	}
\end{figure}
\begin{lemma}[Biconnectivity]\label{lem:biconnectivity}
	Consider an undirected graph $G=(V,E)$. Let $\col:E\rightarrow V$ be the output of \textsc{Biconn}$(G)$. Then $\forall e,e'\in E,e\not=e',$ $\col$ satisfies $\col(e)=\col(e')$ $\Leftrightarrow$ there is a simple cycle in $G$ which contains both $e$ and $e'$. Furthermore, the diameter of the graph $G'$ constructed by \textsc{Biconn}$(G)$ is at most $O(\dep(\p)\cdot \bidiam(G))$, the number of vertices of $G'$ is at most $|V|$, and the number of edges of $G'$ is at most $|E|$.
\end{lemma}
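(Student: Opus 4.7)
The lemma has three parts: the size bounds on $G'$, the correctness of the coloring, and the diameter bound. The size bounds are immediate: $V'=V\setminus\{r\}$ gives $|V'|\le|V|$, while $E'$ is the disjoint union of at most $|V|-1$ edges from step 5 (one $(v,\p(v))$ per non-root $v$) and at most $|E|-|V|+1$ edges from step 6 (one per non-tree edge of $G$), so $|E'|\le|E|$.

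For the coloring, I first establish the sufficient direction: if $\col(e)=\col(e')$, then $e,e'$ lie in a common block. Because ``being in the same block'' is an equivalence, it suffices to verify that each edge of $E'$ certifies that its corresponding pair of $G$-edges lies on a common simple cycle. A step-5 edge $(v,\p(v))\in E'$ is witnessed by some $x$ in the subtree of $v$ and a non-tree edge $(x,y)$ satisfying $\dep_{\p}(\lca(x,y))<\dep_{\p}(\p(v))$; the simple cycle $v\to\p(v)\to\cdots\to\lca(x,y)\to\cdots\to y\to x\to\cdots\to v$ contains both tree edges $(v,\p(v))$ and $(\p(v),\p^{(2)}(v))$. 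A step-6 edge $(u,v)\in E'$ comes from a non-tree edge whose LCA is a proper ancestor of both endpoints, and the fundamental cycle of $(u,v)$ contains the tree edges $(u,\p(u))$ and $(v,\p(v))$ along with $(u,v)$ itself. Finally, for a non-tree edge $(u,v)\in E$ whose LCA is an endpoint, say $u$ is an ancestor of $v$, step 8 colors $(u,v)$ as its deeper endpoint $v$; the cycle $u\to v\to\p(v)\to\cdots\to u$ puts $(u,v)$ and $(v,\p(v))$ in the same block. Chaining these relations along any $G'$-path yields one direction.

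For the necessary direction and the diameter bound, I invoke the fundamental cycle basis argument outlined in the techniques section. Given two edges $e,e'$ in the same block, or two vertices $a,b\in V'$ in the same component of $G'$ (in which case the tree edges they represent lie in a common block, hence $a,b$ lie in a common biconnected component), fix a simple cycle $C$ of length $O(\bidiam(G))$ through them. List the non-tree edges $\hat e_1,\ldots,\hat e_t$ on $C$, so $t\le|C|=O(\bidiam(G))$. Within the fundamental cycle of a single $\hat e_j=(x,y)$ with $\lca(x,y)=\ell_j$, the witness $x$ (and symmetrically $y$) triggers step 5 to insert into $E'$ every edge of the form $(\p^{(i)}(x),\p^{(i+1)}(x))$ for $i$ such that $\p^{(i+1)}(x)$ is a strict descendant of $\ell_j$; combined with the step-6 edge $(x,y)$ (or the coloring rule when $\ell_j\in\{x,y\}$), this makes all $V'$-vertices on $C(\hat e_j)$ into a single $G'$-component of diameter $\le 2\dep(\p)+1$. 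After ordering the $\hat e_j$ so that consecutive fundamental cycles share a tree edge (they overlap along the tree arc between them in $C$), concatenation yields a $G'$-path of length $O(\dep(\p)\cdot\bidiam(G))$ between the lower endpoints of $e,e'$ (resp.\ between $a,b$). This simultaneously establishes $\col(e)=\col(e')$ and the bound $\diam(G')=O(\dep(\p)\cdot\bidiam(G))$.

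\textbf{Expected obstacle.} The most delicate step is the cycle-chaining argument. I need to verify that the fundamental cycles of the non-tree edges on a simple cycle $C$ can always be linearly ordered so that consecutive fundamental cycles share a tree edge, and I must handle the corner cases where $\ell_j$ coincides with a cycle vertex or with the parent of a cycle vertex (so that step 5 fires at the correct levels). A related subtlety is the choice of $C$ itself: when the statement concerns two edges $e,e'$ (rather than two vertices) in the same block, I must show that a simple cycle through both of length $O(\bidiam(G))$ exists, which I plan to do by gluing a bi-diameter cycle through chosen endpoints of $e,e'$ to the tree paths through those endpoints using at most one extra edge on each side. Once these topological details are resolved, summing $O(\dep(\p))$ over $O(\bidiam(G))$ fundamental cycles yields the stated bound.
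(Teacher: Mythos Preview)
Your proposal is correct and follows essentially the same route as the paper: the same edge-by-edge analysis of $E'$ for the sufficient direction, the same fundamental-cycle decomposition plus Claim~\ref{cla:define_a_path_in_G_prime}-style path bound for the converse, and the same gluing argument to manufacture a cycle of length $O(\bidiam(G))$ through two prescribed tree edges.

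One caution on the obstacle you flagged. Your parenthetical justification that consecutive fundamental cycles ``overlap along the tree arc between them in $C$'' is false in general. For instance, take the tree path $r,v_1,\ldots,v_{10}$ with non-tree edges $(r,v_5)$, $(v_6,v_{10})$, $(r,v_{10})$ and the simple cycle $C=r\,v_5\,v_6\,v_{10}\,r$: the fundamental cycles $C_{(r,v_5)}$ and $C_{(v_6,v_{10})}$ share no tree edge at all, and the tree arc $(v_5,v_6)$ on $C$ lies in neither of them. The paper asserts the same linear-chain property without justification and is equally imprecise on this point. What you actually need, and what is straightforward, is only that the \emph{intersection graph} of the $C_{\hat e_j}$ (vertices are the fundamental cycles, edges join pairs sharing a tree edge) is connected: if it split into two nonempty parts with disjoint edge-unions, their XORs would be two edge-disjoint nonempty Eulerian subgraphs whose union is the simple cycle $C$, a contradiction. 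Connectivity of that intersection graph already yields $\dist_{G'}(u,v)\le\sum_j|P_{\hat e_j}|\le O(\dep(\p))\cdot t$, which is all the bound requires.
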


\begin{proof}
Each $v\in V'$ corresponds to a tree edge $(\p(v),v)\in E$. 
Since $V'\subset V$, $|V'|\leq |V|$.
By step~\ref{sta:non_tree_edge_ancestor} and step~\ref{sta:non_tree_edge_non_ancestor}, each edge of $G$  creates at most $1$ edge of $G'$.
Thus, $|E'|\leq |E|$. 
\begin{claim}\label{cla:uv_implies_in_a_cycle}
If $\dist_{G'}(u,v)<\infty$, i.e., vertices $u,v\in V'$ are in the same connected component of $G'$, then there is a simple cycle in $G$ which contains both edges $(u,\p(u))$ and $(v,\p(v))$.
\end{claim}
\begin{proof}
Firstly, let us consider the case when $(u,v)\in E'$.
If $(u,v)$ is added into $E'$ by step~\ref{sta:non_tree_edge_non_ancestor}, then there is a simple cycle in $G$: 
$$
(u,\p^{(1)}(u),\p^{(2)}(u),\cdots,\text{the LCA of }(u,v),\cdots,\p^{(2)}(v),\p^{(1)}(v),v,u).
$$
Both edges $(u,\p(u))$ and $(v,\p(v))$ are in the such cycle.
If $(u,v)$ is added into $E'$ by step~\ref{sta:non_tree_edge_ancestor}, then $u=\p(v)$. Let $a_i,a_j$ be the first and the last appearance of $v$ in $A$ respectively. By step~\ref{sta:non_tree_edge_ancestor}, there exists $k$ with $i\leq k\leq j$ such that $\bac(a_k)<\dep_{\p}(v)$. Thus, there is a vertex $x$ in the subtree of $v$ such that $\bac(x)<\dep_{\p}(u)$. By step~\ref{sta:block_set_bac}, there is an edge $(x,y)\in E$ such that the depth of the LCA of $(x,y)$ is smaller than $\dep_{\p}(u)$ which means that $y$ is not in the subtree of $u$. In this case, there is a simple cycle in $G$:
$$
(x,\p^{(1)}(x),\p^{(2)}(x),\cdots,v,u,\p(u),\cdots,\text{the LCA of }(x,y),\cdots,\p^{(2)}(y),\p^{(1)}(y),y,x).
$$
Since $u=\p(v)$, both edges $(v,\p(v))$, $(u,\p(u))$ are in the such cycle.

Suppose $v,u\in V'$ are in the same connected component of $G'$ and $(v,\p(v))$, $(u,\p(u))$ are in a simple cycle $C_1$ in $G$. 
Suppose $u,w\in V'$ are in the same connected component of $G'$ and $(u,\p(u))$, $(w,\p(w))$ are in a simple cycle $C_2$ in $G$. 
Then, $v$ and $w$ are in the same connected component of $G'$. 
The symmetric difference of the edge set of $C_1$ and the edge set of $C_2$ should form another simple cycle $C_3$ in $G$ which contains both edges $(v,\p(v))$ and $(w,\p(w))$.
By induction on $\dist_{G'}(v,w)$, the claim holds.
\end{proof}
 By Claim~\ref{cla:uv_implies_in_a_cycle} and step~\ref{sta:final_assignment}, $\forall u,v\in V'$, if $\col((u,\p(u)))=\col((v,\p(v)))$, then there should be a simple cycle in $G$ which contains both edges $(u,\p(u))$ and $(v,\p(v))$. 
Consider an edge $(u,v)\in E$ such that neither $u$ nor $v$ is the LCA of $(u,v)$, i.e., $(u,v)$ is a non-tree edge. 
Without loss of generality, suppose $\dep_{\p}(u)\geq \dep_{\p}(v)$.
There is always a cycle in $G$:
$$
(u,\p^{(1)}(u),\p^{(2)}(u),\cdots,\text{the LCA of }(u,v),\cdots,\p^{(2)}(v),\p^{(1)}(v),v,u),
$$
which contains both edges $(u,v),(u,\p(u))$.
By step~\ref{sta:final_assignment}, we have $\col((u,v))=\col((u,\p(u)))=\col'(u)$.
Therefore, $\forall e_1,e_2\in E$, there are always tree edges $e_1',e_2'\in E$ such that $\col(e_1')=\col(e_1),\col(e_2')=\col(e_2)$, $e_1,e_1'$ are either in a simple cycle in $G$ or $e_1=e_1'$, and $e_2,e_2'$ are either in a simple cycle in $G$ or $e_2=e_2'$.
If $\col(e_1)=\col(e_2)$, then $\col(e_1')=\col(e_2')$ which implies that $e_1',e_2'$ are either in a simple cycle in $G$ or $e_1'=e_2'$. Hence if $\col(e_1)=\col(e_2),$ then either there is a simple cycle in $G$ which contains both $e_1,e_2$ or $e_1=e_2$.

Next, let us show that if there is a simple cycle in $G$ which contains both edges $e,e'\in E$, then $\col(e)=\col(e')$. An observation is that each non-tree edge $e=(u,v)$ (i.e., neither $u$ nor $v$ is the LCA of $(u,v)$ in $\p$) defines a simple cycle $C_e$ in $G$:
$$
(u,\p^{(1)}(u),\cdots,\text{the LCA of }(u,v),\cdots,\p^{(1)}(v),v,u).
$$
\begin{claim}\label{cla:define_a_path_in_G_prime}
For any simple cycle $C_e$ defined by a non-tree edge $e=(u,v)$, there is a path $P_e$ in $G'$ such that $P_e$ contains every vertex in $C_e$ except the LCA of $(u,v)$ in $\p$. Furthermore, the length of $P_e$ is at most $2\dep(\p)$.
\end{claim}
\begin{proof}
Without loss of generality, we can assume $\dep_{\p}(u)\geq \dep_{\p}(v)$.
If $v$ is an ancestor of $u$, then the cycle $C_e$ is 
$$
(u,\p^{(1)}(u),\p^{(2)}(u),\cdots,\p^{(s)}(u),v,u)
$$
for some $s\geq 1$.
For each $j\in[s]$, $u$ is in the subtree of $\p^{(j-1)}(u)$. By step~\ref{sta:non_tree_edge_ancestor}, since $\bac(u)\leq \dep_{\p}(v)<\p^{(j)}(u)$ for any $j\in[s]$, we have $(\p^{(j-1)}(u),\p^{(j)}(u))\in E'$. Thus, there is a path $P_e$ in $G'$: $(u,\p^{(1)}(u),\p^{(2)}(u),\cdots,\p^{(s)}(u))$. In this case, the length of $P_e$ should be at most $\dep(\p)$.

If $v$ is not an ancestor of $u$, then the cycle $C_e$ is 
$$
(u,\p^{(1)}(u),\cdots,\p^{(s_1)}(u),\text{the LCA of }(u,v),\p^{(s_2)}(v),\cdots,\p^{(1)}(v),v,u)
$$
for some $s_1,s_2\geq 1$.
By the similar argument, $\forall j\in[s_1]$ the edge $(\p^{(j-1)}(u),\p^{(j)}(u))$ ($\forall j'\in [s_2]$ the edge $(\p^{(j'-1)}(v),\p^{(j')}(v))$) is added into $E'$ by step~\ref{sta:non_tree_edge_ancestor}.
By step~\ref{sta:non_tree_edge_non_ancestor}, $(u,v)$ is added into $E'$.
Therefore, there is a path $P_e$ in $G'$:
$$(\p^{(s_1)}(u),\p^{(s_1-1)}(u),\cdots,\p^{(1)}(u),u,v,\p^{(1)}(v),\p^{(2)}(v),\cdots,\p^{(s_2)}(v)).$$
In this case, the length of $P_e$ should be at most $2\dep(\p)-1$.
\end{proof}
Notice that all the simple cycles defined by the non-tree edges formed a cycle basis of the cycle space of $G$, i.e., the edge set of any simple cycle in $G$ can be represented by an xor sum of the edge sets of cycles $C_{e_1},C_{e_2},\cdots,C_{e_s}$ defined by some non-tree edges $e_1,e_2,\cdots,e_s \in E$~\cite{d18}.
Consider any two tree edges $(u,\p(u)),(v,\p(v))\in E$ contained by a simple cycle $C$. Let $e_1,e_2,\cdots,e_s\in E$ be all the non-tree edges in $C$. Then $C$ can be represented by an xor sum of $C_{e_1},C_{e_2},\cdots,C_{e_s}$. Furthermore, $\forall i\in[s-1],$ $C_{e_{i}}$ and $C_{e_{i+1}}$ should have a common tree edge. According to Claim~\ref{cla:define_a_path_in_G_prime}, for each $i\in[s]$, we can find a path $P_{e_i}$ in $G'$ and $\forall j\in[s-1]$, $P_{e_j}$ intersects $P_{e_{j+1}}$.
Therefore, $u$ and $v$ are in the same connected component in $G'$.
By step~\ref{sta:final_assignment}, $\col((u,\p(u)))=\col'(u)=\col'(v)=\col((v,\p(v)))$.
Now consider a non-tree edge $e=(u,v)\in E$. 
Without loss of generality, we can assume $\dep_{\p}(u)\geq \dep_{\p}(v)$.
A tree edge $(u,\p(u))$ is the simple cycle $C_e$ defined by $e$.
By step~\ref{sta:final_assignment}, we know that $\col(e)=\col'(u)=\col((u,\p(u)))$.
Therefore, we can conclude that $\forall e_1,e_2\in E$, if there is a simple cycle in $G$ which contains both $e_1,e_2$, then $\col(e_1)=\col(e_2)$.

The only thing remaining to prove is the diameter of $G'$. According to Claim~\ref{cla:uv_implies_in_a_cycle}, $\forall u,v\in V'$ with $\dist_{G'}(u,v)<\infty$, there is a cycle $C$ in $G$ which contains both edges $(u,\p(u))$ and $(v,\p(v))$. 

\begin{claim}
$\forall u,v\in V'$, if there is a cycle in $G$ which contains both edges $(u,\p(u))$, $(v,\p(v))$, then there is a cycle $C$ in $G$ with length $O(\bidiam(G))$ which contains both edges $(u,\p(u))$, $(v,\p(v))$.
\end{claim}
\begin{proof}
By the definition of $\bidiam(G)$, there is a cycle $C_1$ with length at most $\bidiam(G)$ which contains both vertices $u,v$. If $C_1$ already contains both edges $(u,\p(u))$, $(v,\p(v))$, then we are done. Otherwise, suppose $C_1$ does not contain $(u,p(u))$. There is an another cycle $C_2$ with length at most $\bidiam(G)$ which contains both vertices $\p(u),v$. We can regard $C_2$ as two disjoint paths from $\p(u)$ to $v$. Thus at least one of the path does not contain the edge $(u,\p(u))$. Suppose this path is $(\p(u),\cdots,x,\cdots,v)$ where $x$ is the first vertex which appears in $C_1$, then we can combine the path $(u,\p(u),\cdots,x)$ with the path obtained by removing the sub-path from $u$ to $x$ of $C_1$ to get a new cycle which contains both the edge $(u,\p(u))$ and $v$. The length of the new cycle is at most $2\cdot \bidiam(G)$. We can do the similar operation to add edge $(v,\p(v))$ into the cycle. Thus, finally we will get a cycle which contains both  $(u,\p(u))$, $(v,\p(v))$ with length at most $3\cdot\bidiam(G)$.
\end{proof}
According to the above claim, we can find a cycle $C$ in $G$ which contains both edges $(u,\p(u))$, $(v,\p(v))$ with length at most $O(\bidiam(G))$. It means that $C$ can be represented by an xor sum of $s\leq O(\bidiam(G))$ basis cycles $C_{e_1},C_{e_2},\cdots,C_{e_s}$ defined by non-tree edges $e_1,e_2,\cdots,e_s$. Furthermore, $\forall i\in[s-1],$ $C_{i}$ and $C_{i+1}$ have at least one common tree edge. By Claim~\ref{cla:define_a_path_in_G_prime}, we can find $s$ paths $P_{e_1},P_{e_2},\cdots,P_{e_s}$ defined by $e_1,e_2,\cdots,e_s$ in $G'$ such that $\forall i\in[s-1],$ $P_{e_i}$ intersects $P_{e_{i+1}}$ at some vertex, and $u,v$ are on some path $P_{e_x},P_{e_y}$ respectively. 
Thus, $\dist_{G'}(u,v)\leq \sum_{i=1}^s |P_{e_i}|\leq s\cdot O(\dep(\p))\leq O(\dep(\p)\cdot \bidiam(G))$, where the second inequality follows from Claim~\ref{cla:define_a_path_in_G_prime}.
To conclude, $\diam(G')\leq O(\dep(\p)\cdot \bidiam(G))$.
\end{proof}

\section{Parallel DFS Sequence in Linear Total Space}\label{sec:DFS_sequence_in_linear_total_space}

In Section~\ref{sec:leaf_sampling}, we will review an algorithmic framework proposed by~\cite{asswz18} for the DFS sequence.
In Section~\ref{sec:compressed_rooted_tree},~\ref{sec:LCA},~\ref{sec:path_generation}, we will discuss the subroutines needed for our DFS sequence algorithm in the offline setting. In Section~\ref{sec:implement_DFS}, we will discuss the implementation in the MPC model.  

\subsection{DFS Sequence via Leaf Sampling}\label{sec:leaf_sampling}

In the following, we review the leaf sampling algorithmic framework proposed by~\cite{asswz18} for finding the DFS sequence of a rooted tree.

\begin{figure}[t!]
\noindent\fbox{
	\begin{minipage}{\linewidth}
		Leaf Sampling Algorithm for DFS Sequence:
		\begin{itemize}
			\item \textbf{Pre-determined}:
			\begin{itemize}
				\item A threshold value $s$.\hfill{//$s$ will be the local memory size in the MPC model.}
			\end{itemize}
			\item \textbf{Input}:
			\begin{itemize}
				
				\item A rooted tree represented by a set of parent pointers $\p:V\rightarrow V$ on a set $V$ of $n$ vertices (i.e., $\p$ has a unique root $r$).
				
			\end{itemize}
			
			\item \textbf{Output}:
			\begin{itemize}
				
				\item The DFS sequence of the rooted tree represented by $\p$.
				
			\end{itemize}
			
			\item \textbf{Leaf sampling algorithm} (\textsc{LeafSampling}$(s,\p:V\rightarrow V)$ ): 
			\begin{enumerate}
				\item If $n\leq s$, return the DFS sequence of $\p$ directly.
				\item Set $t\gets \Theta(s^{1/3}\log n)$, $L\gets \leaves(\p)$.
				\item Each $v\in L$ is independently chosen with probability $p=\min(1,t/|L|)$, and let $S=\{l_1,l_2,\cdots,l_k\}$ be the set of samples. If $|S|^2 > s$, output FAIL.
				\item For every pair of sampled leaves $x,y\in S$ with $x\not =y$, find the least common ancestor $p_{x,y}$ of $(x,y)$, and set $p_{xy,x}$, $p_{xy,y}$ to be two children of $p_{x,y}$ such that $p_{xy,x}$ is an ancestor of $x$ and $p_{xy,y}$ is an ancestor of $y$. 
				\label{sta:finding_LCA}
				\item Sort $l_1,l_2,\cdots,l_k\in S$ such that $\forall i<j\in[k]$, $\rank(p_{l_il_j,l_i})<\rank(p_{l_il_j,l_j})$.
				\item Find the paths $A'_1=P(r,l_1),A'_2=P(\p(l_1),p_{l_1,l_2}),A'_3=P(p_{l_1l_2,l_2},l_2),\cdots,A'_{2k-2}=P(\p(l_{k-1}),p_{l_{k-1},l_k}),A'_{2k-1}=P(p_{l_{k-1}l_{k},l_k},l_k),A'_{2k}=P(l_{2k},r)$, i.e., the paths: $r\rightarrow l_1\rightarrow \text{the LCA of }(l_1,l_2)\rightarrow l_2\rightarrow\cdots\rightarrow l_{k-1}\rightarrow \text{the LCA of }(l_{k-1},l_k)\rightarrow l_k\rightarrow r$.
				\label{sta:finding_paths}
				\item Set $A'\gets A'_1A'_2\cdots A'_{2k},$ i.e., $A'$ is the concatenation of $A'_1,A'_2,\cdots,A'_{2k}$. 
				\item For each element $a'_i$ in the $i^{\text{th}}$ ($i>1$) position of the sequence $A'$,
				\begin{itemize}
					\item if the vertex $a'_i$ is a leaf, keep $a'_i$ as a single copy;
					\item Otherwise, 
					\begin{itemize}
						\item if $a'_{i-1}=\p(a'_i)$, i.e., $i$ is the first position that the vertex $a'_i$ appears in $A'$, split $a'_i$ into $\rank(a'_{i+1})$ copies; \hfill{//$a'_{i+1}$ is a child of $a'_i$.}
						\item if  $a'_{i-1},a'_{i+1}\in\child(a'_i)$, split $a'_i$ into $\rank(a'_{i+1})-\rank(a'_{i-1})$ copies;
						\item if $a'_{i+1}=\p(a'_i)$, i.e., $i$ is the last position that the vertex $a'_i$ appears in $A'$, split $a'_i$ into $|\child(a'_i)|-\rank(a'_{i-1})$ copies. \hfill{//$a'_{i-1}$ is a child of $a'_i$.}
					\end{itemize}
				\end{itemize}
				Let $A''$ be the result sequence.
				\item For each $v\in V$, if $\p(v)$ appears in $A''$ but $v$ does not appear in $A''$, recursively find the DFS sequence of the subtree of $v$, and insert the such sequence into the position after the $\rank(v)^{\text{th}}$ appearance of $\p(v)$ in $A''$.
				Output the final result sequence $A$. 
			\end{enumerate}
		\end{itemize}
	\end{minipage}
}
\end{figure}
\begin{theorem}[Leaf sampling algorithm~\cite{asswz18}]\label{thm:previous_leaf_sampling}
	Consider a set of parent pointers $\p:V\rightarrow V$ on a set $V$ of $n$ vertices. Suppose $\p$ has a unique root. For any $\gamma\geq 0$ and any constant $\delta\in(0,1)$, if both of step~\ref{sta:finding_LCA} and step~\ref{sta:finding_paths} in \textsc{LeafSampling$(n^{\delta},\p)$} can be implemented in the $(\gamma,\delta)$-MPC model with $O(\log(\dep(\p)))$ parallel time, then the leaf sampling algorithm with parameter $s=n^\delta$ on input $\p:V\rightarrow V$ can be implemented in the $(\gamma,\delta)$-MPC model. Furthermore, with probability at least $0.99$, \textsc{LeafSampling$(n^{\delta},\p)$} can output the DFS sequence of $\p$ in $O(\log(\dep(\p)))$ parallel time. If the algorithm fails, then it returns FAIL.
\end{theorem}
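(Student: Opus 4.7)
The plan is to verify four claims about \textsc{LeafSampling}$(n^{\delta},\p)$: (a) the output coincides with the true DFS sequence of $\p$ whenever the algorithm does not return FAIL, (b) FAIL is returned with probability at most $0.01$, (c) the parallel time is $O(\log \dep(\p))$ under the hypothesized $O(\log \dep(\p))$-round bounds on steps~\ref{sta:finding_LCA} and~\ref{sta:finding_paths}, and (d) every remaining step admits an $O(1)$-round implementation in the $(\gamma,\delta)$-MPC model. The whole analysis proceeds by induction on $\dep(\p)$.

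For correctness, the key observation is that sorting the sampled leaves by $\rank_{\p}(p_{l_i l_j, l_i})$ places them in the exact order in which they appear in the true DFS sequence of $\p$, because the smaller-ranked ancestor-child of their LCA is explored first. The concatenation $A' = A'_1 \cdots A'_{2k}$ therefore traces the DFS skeleton restricted to ancestors of sampled leaves. The splitting rule then assigns every internal vertex exactly the multiplicity it has in the true DFS sequence restricted to that skeleton, namely $\rank(\text{next child}) - \rank(\text{previous child})$ for each middle occurrence, with boundary cases handled analogously. Between consecutive copies of a vertex $v$ in $A''$, the true DFS explores the subtree of a particular child of $v$, and by construction that subtree contains no sampled leaf, so the recursive call produces exactly its DFS sequence and it is inserted at the right position. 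Invoking the inductive hypothesis on each recursive subproblem then yields the DFS sequence of $\p$.

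For the failure probability, the only explicit FAIL is triggered by $|S|^2 > s$. Since $\E[|S|] \leq t = \Theta(s^{1/3}\log n)$ with $s = n^{\delta}$ and the leaves are sampled independently, a Chernoff bound gives $\Pr[|S| > 2t] \leq n^{-\Omega(1)}$, and $(2t)^2 = O(s^{2/3}\log^2 n) < s$ for all sufficiently large $n$. For the parallel time, each invocation contributes $O(\log \dep(\p))$ rounds for the LCA and path subroutines together with $O(1)$ rounds for sorting, prefix-sum computation of the copy counts, and inserting recursive subsequences into $A''$ at positions indexed by the data-organization primitives of Section~\ref{sec:data_organize}. It remains to bound the recursion depth by $O(1)$: a Chernoff-plus-union-bound argument shows that with high probability every subtree with at least $C|L|\log n / t$ leaves acquires a sample, so every recursive subproblem has its leaf count shrunk by a factor $\Omega(t/\log n) = \Omega(s^{1/3})$. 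Starting from $|L| \leq n$ and terminating once the subproblem reaches the base-case threshold $n \leq s = n^\delta$, this collapses after $O(1/\delta) = O(1)$ levels. Combining $O(1)$ levels with $O(\log \dep(\p))$ per level, and union-bounding the per-level failure events, yields total time $O(\log \dep(\p))$ and success probability at least $0.99$.

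The main obstacle is converting the shrinkage in leaf count into a shrinkage in subproblem size, since a subtree with few leaves can still host a long ancestor chain and thus a large vertex count. I would resolve this by observing that any vertex not appearing in $A''$ must lie strictly below some vertex $v$ whose entire subtree misses the sample; hence the ancestor chains of sampled leaves are already absorbed into the paths $A'_i$, and every recursive subtree is a pruned region whose vertex count can be bounded in terms of its leaf count via the sampling guarantee iterated across recursion levels. Making this geometric decrease hold uniformly across all parallel subproblems and all $O(1/\delta)$ levels, with a union bound small enough to keep the global success probability above $0.99$, is the delicate step; every other ingredient is either a standard MPC primitive (sort, scatter-gather, prefix sum) or directly supplied by the theorem's hypotheses on steps~\ref{sta:finding_LCA} and~\ref{sta:finding_paths}.
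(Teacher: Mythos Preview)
This theorem is quoted verbatim from \cite{asswz18} and the present paper supplies no proof of its own; there is nothing here to compare your argument against.

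That said, your sketch does track the structure of the original argument, but the paragraph you flag as ``the main obstacle'' contains a real gap. You attempt to drive the recursion into the explicit base case $n\le s$ by bounding subtree vertex counts in terms of leaf counts, and you correctly observe that a long chain ending in a single unsampled leaf defeats any such bound. The fix is not to bound vertex counts at all. Once a subproblem has $|L|\le t=\Theta(s^{1/3}\log n)$ leaves, the sampling probability becomes $p=\min(1,t/|L|)=1$, so $S=L$; then every vertex of the subproblem lies on some root-to-sampled-leaf path, hence appears in $A'$ and therefore in $A''$, and step~9 spawns no further recursive call. Thus the recursion effectively halts as soon as the \emph{leaf} count drops below $t$, which your Chernoff-plus-union-bound already shows happens within $O(1/\delta)$ levels; the explicit $n\le s$ check is merely a short-circuit for trivially small inputs and is not what terminates the recursion in general. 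With this correction the remainder of your outline is sound.
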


By Theorem~\ref{thm:previous_leaf_sampling}, we only need to give a linear total space MPC algorithm for the LCA problem and the path generation problem to design an efficient DFS sequence algorithm in the $(0,\delta)$-MPC model. 

In \cite{asswz18}, they proposed to use doubling algorithms to compute the LCA and generate the paths. Since they need to store the every $2^i$-th ancestor for each vertex, the total space needed is $\Theta(n\cdot\log(\text{the depth of the tree}))$. We will show that we only need to apply the doubling algorithm for a compressed tree, instead of applying the doubling algorithm for the original tree.

\subsection{Compressed Rooted Tree}\label{sec:compressed_rooted_tree}
Given a set of parent pointers $\p:V\rightarrow V$, we will show how to compress the rooted tree represented by $\p$.
\begin{figure}
\noindent\fbox{
	\begin{minipage}{\linewidth}
		Construction of a Compressed Rooted Tree:
		\begin{itemize}
			\item \textbf{Input}:
			\begin{itemize}
				
				\item A rooted tree represented by a set of parent pointers $\p:V\rightarrow V$ on a set $V$ of $n$ vertices ($\p$ has a unique root $r$).
				
			\end{itemize}
			
			\item \textbf{Output}:
			\begin{itemize}
				
				\item A vertex set $V'\subseteq V$, a set of parent pointers $\p':V'\rightarrow V'$ on $V'$.
				
			\end{itemize}
			
			\item \textbf{Tree compression} (\textsc{Compress}$(\p:V\rightarrow V)$ ): 
			\begin{enumerate}
				\item Compute the depth of $\p$, the depth of each vertex and set $d\gets \dep(\p)$, $t\gets \lceil\log d\rceil$.
				\item $V'\gets \{v\in V \mid \dep_{\p}(v)\mod t = 0, \dep_{\p}(v)+t\leq d\}$.
				\item Initialize $\p':V'\rightarrow V'$. For each $v\in V'$, $\p'(v)\gets \p^{(t)}(v)$.
				\item Output $V'$, $\p'$.
			\end{enumerate}
		\end{itemize}
	\end{minipage}
}
\end{figure}
\begin{lemma}[Properties of a compressed rooted tree]\label{lem:properties_compressed_rooted_tree}
Let $\p:V\rightarrow V$ be a set of parent pointers on a vertex set $V$ with $|V|> 1$, and $\p$ has a unique root. Let $t=\lceil\log(\dep(\p))\rceil$ and let $(V',\p')=$\textsc{Compress}$(\p)$. Then it has the following properties:
\begin{enumerate}
\item $|V'|\leq |V|/\log(\dep(\p))$.
\item $\forall v\in V',i\in \mathbb{N},$ $\p'^{(i)}(v)=\p^{(i\cdot t)}(v)\in V'$.
\item $\forall v\in V,$ $\exists i\in\{0,1,\cdots,2t\},$ such that $\p^{(i)}(v)\in V'$.
\end{enumerate}
\end{lemma}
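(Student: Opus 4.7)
My plan is to establish the three properties separately, tackling the structural claims (2) and (3) first by direct computation, and then the counting bound (1) via a charging argument, which I anticipate to be the main obstacle.

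For Property 2, I would induct on $i$. The base case $i=0$ is trivial. For the inductive step, use the defining equation $\p'(u) = \p^{(t)}(u)$ for $u \in V'$ together with the inductive hypothesis to get
\[
\p'^{(i)}(v) = \p'\!\left(\p'^{(i-1)}(v)\right) = \p'\!\left(\p^{((i-1)t)}(v)\right) = \p^{(t)}\!\left(\p^{((i-1)t)}(v)\right) = \p^{(it)}(v).
\]
To justify this chain I must verify that $\p^{((i-1)t)}(v) \in V'$ so that $\p'$ is defined there. Its depth (if it has not yet collapsed to the root) is $\dep_\p(v) - (i-1)t$, still a nonnegative multiple of $t$, and the constraint $\dep_\p(\p^{((i-1)t)}(v))+t \le \dep_\p(v)+t \le d$ holds because $v \in V'$. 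When the iterate reaches the root $r$, $\p'(r) = \p^{(t)}(r) = r$, and $r \in V'$ because $0 + t \le d$; the latter follows from $|V|>1$ and $t = \lceil\log d\rceil$, which give $d \ge t$.

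For Property 3, write $\dep_\p(v) = qt + r$ with $0 \le r < t$. The ancestor $u = \p^{(r)}(v)$ has depth $qt$, a multiple of $t$. If $qt + t \le d$, then $u \in V'$ and we are done with $i = r \le t - 1 \le 2t$. Otherwise $qt > d - t$, and I step one more tier up: $\p^{(r+t)}(v)$ has depth $(q-1)t$ (note $q \ge 1$ in this case), and $(q-1)t + t = qt \le d$, so $\p^{(r+t)}(v) \in V'$ with $i = r + t \le 2t - 1 \le 2t$.

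For Property 1, the intuition is that vertices of $V'$ occupy only one out of every $t$ consecutive depth levels, which should yield $|V'| \le |V|/t = |V|/\lceil\log(\dep(\p))\rceil$. My plan is a charging argument: to each $v \in V' \setminus \{r\}$ at depth $jt$ I assign the ancestor chain $v, \p(v), \p^{(2)}(v), \ldots, \p^{(t-1)}(v)$, giving $t$ distinct vertices occupying the depth range $[(j-1)t+1,\ jt]$, only the top endpoint of which lies in $V'$ (the others have depths not divisible by $t$). For $v,v' \in V'$ at \emph{different} depths, the assigned chains lie in disjoint depth ranges, so the charging is injective across different tiers. The main technical obstacle will be handling overlap between chains of vertices in $V'$ at the same depth whose paths share ancestors above their LCA; I plan to resolve this by refining the assignment so that each $v$ is charged only to the portion of its chain strictly below its LCA with the preceding depth-$jt$ sibling, together with $v$ itself, and then sum these contributions over tiers to obtain the claimed bound.
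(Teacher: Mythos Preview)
Your arguments for Properties~2 and~3 are correct and coincide with the paper's (you are in fact more careful than the paper in verifying the second membership condition $\dep_{\p}(\cdot)+t\le d$ during the induction in Property~2).

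For Property~1, however, the upward-charging plan has a genuine gap. If many vertices of $V'$ at depth $jt$ share a common parent at depth $jt-1$ (a star), their ancestor chains coincide from depth $jt-1$ upward, so your LCA-based refinement assigns to each such $v$ only the single new vertex $v$ itself; summing then gives nothing better than $|V'|\le|V|$. The paper charges in the opposite direction: to each $v\in V'$ it assigns the set
\[
S(v)=\{u\in V:\dep_{\p}(u)>\dep_{\p}(v),\ \exists i\in[t-1],\ \p^{(i)}(u)=v\},
\]
the proper descendants of $v$ within distance $t-1$. These sets are automatically pairwise disjoint (distinct $V'$-vertices at the same depth have disjoint subtrees, and $V'$-vertices at different depths are at least $t$ apart so their $S$-sets lie in disjoint depth bands) and disjoint from $V'$; this disjointness is exactly what the upward direction lacks. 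The defining constraint $\dep_{\p}(v)+t\le d$ in \textsc{Compress} is then used to argue $|S(v)|\ge t-1$, whence $|V'|\cdot t\le\sum_{v\in V'}(|S(v)|+1)\le|V|$. (A caveat worth knowing before you invest effort here: that last step tacitly assumes the subtree of $v$ reaches depth $\dep_{\p}(v)+t-1$, which fails when $v\in V'$ is a leaf. A star of $k$ leaves at depth $t$ together with one branch continuing to depth $2t$ already violates the sharp inequality for $k\ge 1$; only the $O(|V|/t)$ version, which is all the later sections actually use, survives.)
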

\begin{proof}
Consider the first property. For each $v\in V'$, we define a set $$S(v)=\{u\in V\mid \dep_{\p}(u)>\dep_{\p}(v), \exists i\in[t-1], \p^{(i)}(u) = v \}.$$ 
$\forall u\in S(v),$ we have $\dep_{\p}(u)-\dep_{\p}(v)<t$.
Since $\forall v\in V'$, $\dep_{\p}(v)\mod t=0$, we have $S(v)\cap V'=\emptyset$. 
Furthermore, it is easy to show that $\forall u\not =v\in V'$, $S(u)\cap S(v)=\emptyset$.
Thus, $|V'|+\sum_{v\in V'} |S(v)| = \sum_{v\in V'} (|S(v)|+1) \leq |V|$. 
On the other hand, since $\forall v\in V'$, $\dep_{\p}(v)+t\leq \dep(\p),$ we know that $|S(v)|\geq t-1$.
Therefore $\sum_{v\in V'}(|S(v)|+1)\geq |V'|\cdot t$. 
To conclude, $|V'|\leq |V|/t\leq |V|/\log(\dep(\p))$.

Consider the second property.
If $v$ is a root vertex, $\p'(v)=\p^{(t)}(v)=v\in V'$. 
For a non-root vertex $v\in V'$, $\dep_{\p}(\p^{(t)}(v))=\dep_{\p}(v)-t$. Since $\dep_{\p}(v) \mod t = 0$, we have $\dep_{\p}(\p^{(t)}(v))\mod t=0$ which means that $\p'(v)=\p^{(t)}(v)\in V'$. Now we prove by induction. Suppose $\p'^{(i-1)}(v)=\p^{((i-1)\cdot t)}(v)$, then $\p'^{(i)}(v)=\p'(\p'^{(i-1)}(v))=\p^{(t)}(\p^{((i-1)\cdot t)}(v))=\p^{(i\cdot t)}(v)$.

Consider the third property. For $v\in V$, $\exists j\in\{0,1,\cdots,t-1\},$ such that $\dep_{\p}(\p^{(j)}(v))\mod t = 0$. Since $\dep_{\p}(\p^{(j+t)}(v))\mod t = 0$ and $\dep_{\p}(\p^{(j+t)}(v)) + t\leq \dep(\p)$, we know that $\p^{(j+t)}(v)\in V'$. Since $j+t\leq 2t$, the property holds.
\end{proof}

\subsection{Least Common Ancestor}\label{sec:LCA}
Given a rooted tree represented by a set of parent pointers $\p:V\rightarrow V$ on a vertex set $V$, and a set of $q$ queries $Q=\{(u_1,v_1),(u_2,v_2),\cdots,(u_q,v_q)\}$ where $\forall i\in[q],u_i\not =v_i,u_i,v_i\in \leaves(\p)$, we show a space efficient algorithm which can output the LCA of each queried pair of vertices. Notice that the assumption that queries only contain leaves is without loss of generality: we can attach an additional child vertex $v$ to each non-leaf vertex $u$. Thus, $v$ is a leaf vertex. When a query contains $u$, we can use $v$ to replace $u$ in the query, and the result will not change. 
\begin{figure}
\noindent\fbox{
	\begin{minipage}{\linewidth}
		Lowest Common Ancestor:
		\begin{itemize}
			\item \textbf{Input}:
			\begin{itemize}
				
				\item A rooted tree represented by a set of parent pointers $\p:V\rightarrow V$ on a set $V$ of $n$ vertices ($\p$ has a unique root $r$), and a set of $q$ queries $Q=\{(u_1,v_1),(u_2,v_2),\cdots,(u_q,v_q)\}$ where $\forall i\in[q],u_i\not =v_i,u_i,v_i\in \leaves(\p)$.
				
			\end{itemize}
			
			\item \textbf{Output}:
			\begin{itemize}
				
				\item $\lca:Q\rightarrow V\times V\times V$.
				
			\end{itemize}
			
			\item \textbf{Finding LCA} (\textsc{LCA}$(\p:V\rightarrow V,Q)$ ): 
			\begin{enumerate}
				\item $(V',\p')\gets$\textsc{Compress}$(\p)$. \hfill{//(see Lemma~\ref{lem:properties_compressed_rooted_tree}).}
				\item Set $d\gets \dep(\p), t\gets \lceil\log d\rceil$ and compute mappings $g_0,g_1,\cdots g_{t}:V'\rightarrow V'$ such that $\forall v\in V',j\in\{0,1,\cdots,t \}$, $g_j(v)=\p'^{(2^j)}(v)$. \label{sta:find_sparse_table}
				\item For each query $(u_i,v_i)\in Q$:\hfill{//Suppose $\dep_{\p}(u_i)\geq\dep_{\p}(v_i)$.}\label{sta:handle_each_lca_query}
				\begin{enumerate}
					\item If $\dep_{\p}(u_i)>\dep_{\p}(v_i) + 2t$, find an ancestor $\hat{u}_i$ of $u_i$ in $\p$ such that $\dep_{\p}(\hat{u}_i)\leq \dep_{\p}(v_i)+2t$ and $\dep_{\p}(\hat{u}_i)\geq\dep_{\p}(v_i)$. Otherwise, $\hat{u}_i\gets u_i$. \label{sta:find_a_close_u}
					\item If $\exists j\in[4t]$ $\p^{(j)}(\hat{u}_i)$ is the LCA of $(\hat{u}_i,v_i)$ in $\p$, set $\lca(u_i,v_i)=(\p^{(j)}(\hat{u}_i),x,y)$ where $x,y$ are children of $\p^{(j)}(\hat{u}_i)$ and $x,y$ are ancestors of $\hat{u}_i,v_i$ respectively. The query of $(u_i,v_i)$ is finished.
					\label{sta:check_directly}
					\item Find an ancestor $u'_i$ of $\hat{u}_i$ in $\p$ such that $u'_i$ is the closest vertex to $\hat{u}_i$ in $V'$, i.e., $\dep_{\p}(\hat{u}_i) - \dep_{\p}(u'_i)$ is minimized. Similarly, find an ancestor $v'_i$ of $v_i$ in $\p$ such that $v'_i$ is the closest vertex to $v_i$ in $V'$, i.e., $\dep_{\p}(v_i)-\dep_{\p}(v'_i)$ is minimized.\label{sta:find_vertex_in_Vprime}
					\item Find $u''_i\not=v''_i\in V'$ such that they are ancestors of $u'_i$ and $v'_i$ respectively, and $\p'(u''_i)=\p'(v''_i)$ is the LCA of $(u'_i,v'_i)$ in $\p'$.\label{sta:find_uv_double_prime}
					\item Find the smallest $j\in[2t]$ such that $\p^{(j)}(u''_i)=\p^{(j)}(v''_i)$. Set $\lca(u_i,v_i)=(\p^{(j)}(u''_i),\p^{(j-1)}(u''_i),\p^{(j-1)}(v''_i))$.\label{sta:find_final_LCA}
				\end{enumerate}
			\end{enumerate}
		\end{itemize}
	\end{minipage}
}
\end{figure}

Before we analyze the algorithm \textsc{LCA}$(\p,Q)$, let us discuss some details of the algorithm.
\begin{enumerate}
\item We pre-compute $\dep_{\p}(v)$ and $\dep_{\p'}(u)$ for every $v\in V$ and $u\in V'$.
\item To implement step~\ref{sta:find_a_close_u}, we firstly check whether $\dep_{\p}(u_i)>\dep_{\p}(v_i)+2t$. If it is not true, we can set $\hat{u}_i$ to be $u_i$ directly. Otherwise, according to Lemma~\ref{lem:properties_compressed_rooted_tree}, there is a $j\in\{0,1,\cdots,2t\}$ such that $\p^{(j)}(u_i)\in V'$. Since $\dep_{\p}(u_i)>\dep_{\p}(v_i)+2t$, $\dep_{\p}(\p^{(j)}(u_i))>\dep_{\p}(v_i)$. We initialize $\hat{u}_i$ to be $\p^{(j)}(u_i)\in V'$. For $k=t\rightarrow 0$, if $\dep_{\p}(g_k(\hat{u}_i))>\dep_{\p}(v_i)$ (i.e., $\dep_{\p}(\p'^{(2^k)}(\hat{u}_i))>\dep_{\p}(v_i)$), we set $\hat{u}_i\gets g_k(\hat{u}_i)=\p'^{(2^k)}(\hat{u}_i)$. Due to Lemma~\ref{lem:properties_compressed_rooted_tree} again, the final $\hat{u}_i$ must satisfy $\dep_{\p}(\hat{u}_i)\geq\dep_{\p}(v_i)$ and $\dep_{\p}(\hat{u}_i)\leq\dep_{\p}(v_i)+2t$. This step takes time $O(t)$.
\end{enumerate}
\begin{lemma}[LCA algorithm]\label{lem:correctness_of_LCA}
Let $\p:V\rightarrow V$ be a set of parent pointers on a vertex set $V$. $\p$ has a unique root. Let $Q=\{(u_1,v_1),(u_2,v_2),\cdots,(u_q,v_q)\}$ be a set of $q$ pairs of vertices where $\forall i\in[q],u_i\not =v_i,u_i,v_i\in \leaves(\p)$. Let $\lca:Q\rightarrow V\times V\times V$ be the output of \textsc{LCA}$(\p,Q)$. For $(u_i,v_i)\in Q$, $(p_i,p_{i,u_i},p_{i,v_i})=\lca(u_i,v_i)$ satisfies that $p_i$ is the LCA of $(u_i,v_i)$, $p_{i,u_i},p_{i,v_i}$ are ancestors of $u_i,v_i$ respectively, and $p_{i,u_i},p_{i,v_i}$ are children of $p_i$.
\end{lemma}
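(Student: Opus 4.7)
The plan is to verify that for each query $(u_i, v_i)\in Q$ the triple $(p_i, p_{i,u_i}, p_{i,v_i})$ produced either by the early-exit step~\ref{sta:check_directly} or by step~\ref{sta:find_final_LCA} satisfies: $p_i$ is the LCA of $(u_i, v_i)$ in $\p$, and $p_{i,u_i}, p_{i,v_i}$ are distinct children of $p_i$ lying on the root-paths of $u_i$ and $v_i$. Throughout I assume WLOG $\dep_{\p}(u_i) \geq \dep_{\p}(v_i)$, as in step~\ref{sta:handle_each_lca_query}. First I would check that $\hat u_i$ from step~\ref{sta:find_a_close_u} is an ancestor of $u_i$ with $\dep_{\p}(v_i) \leq \dep_{\p}(\hat u_i) \leq \dep_{\p}(v_i)+2t$. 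Since the true LCA of $(u_i, v_i)$ is an ancestor of $u_i$ of depth at most $\dep_{\p}(v_i) \leq \dep_{\p}(\hat u_i)$, it lies on the path from $u_i$ through $\hat u_i$ to the root, and hence equals the LCA of $(\hat u_i, v_i)$ in $\p$; so it suffices to analyze $(\hat u_i, v_i)$.

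If step~\ref{sta:check_directly} succeeds, correctness is immediate, so I assume it fails. By the third property of Lemma~\ref{lem:properties_compressed_rooted_tree}, the vertices $u'_i, v'_i \in V'$ produced by step~\ref{sta:find_vertex_in_Vprime} lie within $2t$ steps of $\hat u_i$ and $v_i$ in $\p$. I would then argue that neither $u'_i$ is a $\p'$-ancestor of $v'_i$ nor vice versa: by the second property of Lemma~\ref{lem:properties_compressed_rooted_tree} such a relation in $\p'$ is also a $\p$-ancestor relation, making (say) $u'_i$ a common $\p$-ancestor of $(\hat u_i, v_i)$ at depth at least $\dep_{\p}(\hat u_i)-2t$, which would put the true LCA within $4t$ steps of $\hat u_i$ and trigger step~\ref{sta:check_directly}. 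Consequently, the $\p'$-LCA of $(u'_i, v'_i)$ is a proper ancestor of both, and step~\ref{sta:find_uv_double_prime} correctly extracts distinct children $u''_i \neq v''_i$ of that LCA; using the second property of Lemma~\ref{lem:properties_compressed_rooted_tree} again, $\dep_{\p}(u''_i) = \dep_{\p}(v''_i) = \dep_{\p}(p'_i) + t$ where $p'_i = \p'(u''_i) = \p'(v''_i) = \p^{(t)}(u''_i) = \p^{(t)}(v''_i)$.

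The crux is to prove that the LCA of $(u''_i, v''_i)$ in $\p$ equals the LCA $L^*$ of $(\hat u_i, v_i)$ in $\p$. Call the former $L$. Since $u''_i, v''_i$ are $\p$-ancestors of $\hat u_i, v_i$ respectively, $L$ is a common $\p$-ancestor of $(\hat u_i, v_i)$, so $L$ is an ancestor of $L^*$. For the other direction I would show $L^*$ is an ancestor of $u''_i$ (the argument for $v''_i$ is symmetric): otherwise $u''_i$ would be a strict $\p$-ancestor of $L^*$, hence a common ancestor of $(\hat u_i, v_i)$; then $u''_i$ and $v''_i$ would both be $\p$-ancestors of $v_i$, lying on a single root-path, and their equal depth would force $u''_i = v''_i$, contradicting their construction. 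Hence $L^*$ is a common ancestor of $(u''_i, v''_i)$, so an ancestor of $L$, and $L = L^*$.

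Finally, since $\p^{(t)}(u''_i) = \p^{(t)}(v''_i) = p'_i$, the smallest $j \in [2t]$ with $\p^{(j)}(u''_i) = \p^{(j)}(v''_i)$ in step~\ref{sta:find_final_LCA} exists and satisfies $j \leq t$; by minimality $\p^{(j-1)}(u''_i) \neq \p^{(j-1)}(v''_i)$ and both are children of $\p^{(j)}(u''_i) = L = L^*$, which is the LCA of $(u_i, v_i)$ in $\p$, and they are $\p$-ancestors of $u_i, v_i$ respectively by chasing the ancestry chain $u''_i$–$u'_i$–$\hat u_i$–$u_i$ and its counterpart. The main obstacle I anticipate is the equality $L = L^*$ in the previous paragraph: ruling out the case where $L^*$ is a strict $\p$-descendant of $u''_i$ (or of $v''_i$) is exactly where the choice of $u''_i, v''_i$ as \emph{children} of the $\p'$-LCA, together with the observation that nearby LCAs have already been eliminated by step~\ref{sta:check_directly}, must be used in combination.
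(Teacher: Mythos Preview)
Your proposal is correct and follows essentially the same route as the paper's proof: reduce to $(\hat u_i,v_i)$, handle the shallow case via step~\ref{sta:check_directly}, then use the compressed-tree properties of Lemma~\ref{lem:properties_compressed_rooted_tree} to argue that the $\p'$-LCA of $(u'_i,v'_i)$ sandwiches the true $\p$-LCA within $t$ steps of $u''_i,v''_i$, so step~\ref{sta:find_final_LCA} recovers it. Your $L=L^{*}$ paragraph is a more explicit version of what the paper compresses into the sentence ``Thus, the LCA of $(u_i,v_i)$ in $\p$ is the same as the LCA of $(u'_i,v'_i)$ in $\p$'' together with the observation that $\p'(u''_i)$ is a common $\p$-ancestor; the paper skips the equal-depth contradiction you spell out, but the logic is the same.
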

\begin{proof}
Without loss of generality, we can assume $\dep_{\p}(u_i)\geq \dep_{\p}(v_i)$.
After step~\ref{sta:find_a_close_u}, $\hat{u}_i$ satisfies $\dep_{\p}(\hat{u}_i)\geq \dep_{\p}(v_i)$ and $\dep_{\p}(\hat{u}_i)\leq \dep_{\p}(v_i)+2t$. Notice that the LCA of $(u_i,v_i)$ in $\p$ is the same as the LCA of $(\hat{u}_i,v_i)$ in $\p$. In step~\ref{sta:check_directly}, if we find the LCA of $(\hat{u}_i,v_i)$, then the lemma holds for $\lca(u_i,v_i)$. Otherwise, the depth of the LCA of $(\hat{u}_i,v_i)$ is smaller than $\dep_{\p}(\hat{u}_i)-4t\leq \dep_{\p}(v_i)-2t$. By combining with Lemma~\ref{lem:properties_compressed_rooted_tree}, neither of $u'_i$ nor $v'_i$ in step~\ref{sta:find_vertex_in_Vprime} can be the LCA of $(\hat{u}_i,v_i)$ in $\p$. Thus, the LCA of $(u_i,v_i)$ in $\p$ is the same as the LCA of $(u'_i,v'_i)$ in $\p$.
According to step~\ref{sta:find_uv_double_prime}, $u''_i,v''_i$ are ancestors of $u'_i,v'_i$ respectively in both $\p$ and $\p'$, but neither of $u''_i$ nor $v''_i$ is the common ancestor of $(u'_i,v'_i)$. Furthermore, $\p'(u''_i)=\p'(v''_i)$ is the LCA of $u'_i,v'_i$ in $\p'$. Thus, $\p'(u''_i)$ is a common ancestor of $(u'_i,v'_i)$ in $\p$. 
By combining with Lemma~\ref{lem:properties_compressed_rooted_tree}, we know that there exists $j\in[2t]$ such that $\p^{(j)}(u''_i)$ is the LCA of $(u'_i,v'_i)$ in $\p$. In step~\ref{sta:find_final_LCA}, we can find the LCA of $(u'_i,v'_i)$ in $\p$ and thus the LCA of $(u_i,v_i)$.
\end{proof}

\subsection{Multi-Paths Generation}\label{sec:path_generation}
Consider a rooted tree represented by a set of parent pointers $\p:V\rightarrow V$ on a vertex set $V$ and a set of $q$ vertex-ancestor pairs $Q=\{(u_1,v_1),(u_2,v_2),\cdots,(u_q,v_q)\}$ where $\forall i\in[q],$ $v_i$ is an ancestor of $u_i$. We show a space efficient algorithm \textsc{MultiPaths}$(\p,Q)$ which can generate all the paths $P(u_1,v_1),P(u_2,v_2),\cdots,P(u_q,v_q)$. 

\begin{figure}
\noindent\fbox{
	\begin{minipage}{\linewidth}
		Multi-Paths Generation:
		\begin{itemize}
			\item \textbf{Input}:
			\begin{itemize}
				
				\item A rooted tree represented by a set of parent pointers $\p:V\rightarrow V$ on a set $V$ of $n$ vertices ($\p$ has a unique root $r$), and a set of $q$ vertex-ancestor pairs $Q=\{(u_1,v_1),(u_2,v_2),\cdots,(u_q,v_q)\}$ where $\forall i\in[q],v_i$ is an ancestor of $u_i$.
				
			\end{itemize}
			
			\item \textbf{Output}:
			\begin{itemize}
				
				\item $P_1,P_2,\cdots,P_q$.
				
			\end{itemize}
			
			\item \textbf{Generating multiple path sequences} (\textsc{MultiPaths}$(\p:V\rightarrow V,Q)$ ): 
			\begin{enumerate}
				\item $(V',\p')\gets$\textsc{Compress}$(\p)$. \hfill{//(see Lemma~\ref{lem:properties_compressed_rooted_tree}).}
				\item Set $d\gets \dep(\p), t\gets \lceil\log d\rceil$ and compute mappings $g_0,g_1,\cdots g_{t}:V'\rightarrow V'$ such that $\forall v\in V',j\in\{0,1,\cdots,t \}$, $g_j(v)=\p'^{(2^j)}(v)$.
				\item For each vertex-ancestor pair $(u_i,v_i)\in Q$:\label{sta:multiple_path_queries}
				\begin{enumerate}
					\item If $\dep_{\p}(u_i)-\dep_{\p}(v_i)\leq 2t$, generate the path sequence $P_i=(u_i,\p^{(1)}(u_i),\p^{(2)}(u_i),\cdots,v_i)$ directly. \label{sta:generate_path_directly}
					\item Otherwise, find the minimum $j\in[2t]$ such that $\p^{(j)}(u_i)\in V'$. Set $u'_i\gets \p^{(j)}(u_i)$. Find an ancestor $v'_i$ of $u'_i$ in $\p'$ such that $\dep_{\p}(v'_i)\geq \dep_{\p}(v_i)$ and $\dep_{\p}(v'_i)-2t\leq \dep_{\p}(v_i)$.
					\label{sta:find_end_points_subpath_in_Vprime}
					\item Generate the path $P'(u'_i,v'_i)$ in $\p'$.\label{sta:find_path_pivots}
					\item Initialize a sequence $A$ as the concatenation of $(u_i)$, $P'(u'_i,v'_i)$ and $(v_i)$.\label{sta:initialization_of_path}
					\item Repeat: for each element $a_i$ in $A$, if $a_i$ is not the last element and $a_{i+1}\not=\p(a_i)$, insert $\p(a_i)$ between $a_i$ and $a_{i+1}$; until $A$ does not change. Output the final sequence $A$ as the path sequence $P_i$.\label{sta:final_path}
				\end{enumerate}
			\end{enumerate}
		\end{itemize}
	\end{minipage}
}
\end{figure}

Before we analyze the correctness of the algorithm, let us discuss some details.
\begin{enumerate}
\item In step~\ref{sta:generate_path_directly}, if the length of the path is at most $2t$, then we can generate the path in $O(t)$ rounds. In the $j$-th round, we can find the vertex $\p^{(j)}(u_i)=\p(\p^{(j-1)}(u_i))$.
\item In step~\ref{sta:find_end_points_subpath_in_Vprime}, we use the following way to find $v'_i$. We initialize $v'_i$ as $u'_i$. For $k=t\rightarrow 0$, if $\dep_{\p}(g_k(v'_i))>\dep_{\p}(v_i)$ (i.e., $\dep_{\p}(\p'^{(2^k)}(v'_i))>\dep_{\p}(v_i)$), we set $v'_i\gets g_k(v'_i)=\p'^{(2^k)}(v'_i)$.
\end{enumerate}

\begin{lemma}[Generation of multiple paths]\label{lem:correctness_of_multipath}
Let $\p:V\rightarrow V$ be a set of parent pointers on a vertex set $V$. $\p$ has a unique root. Let $Q=\{(u_1,v_1),(u_2,v_2),\cdots,(u_q,v_q)\}\subseteq V\times V$ be a set of pairs of vertices where $\forall j\in[q],$ $v_j$ is an ancestor of $u_j$ in $\p$. Let $P_1,P_2,\cdots,P_q$ be the output of \textsc{MultiPaths}$(\p,Q)$. Then $\forall j\in [q], P_j=P(u_j,v_j)$, i.e., $P_j$ is a sequence which denotes a path from $u_j$ to $v_j$ in $\p$.
\end{lemma}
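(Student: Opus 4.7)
The plan is to prove correctness of \textsc{MultiPaths} by handling the two branches of step~\ref{sta:multiple_path_queries} separately. For the short-path branch (step~\ref{sta:generate_path_directly}), when $\dep_{\p}(u_i)-\dep_{\p}(v_i)\leq 2t$, the output $(u_i,\p(u_i),\p^{(2)}(u_i),\ldots,v_i)$ is literally the definition of $P(u_i,v_i)$, so I just need to note that since $v_i$ is an ancestor of $u_i$, iterating $\p$ starting from $u_i$ reaches $v_i$ in exactly $\dep_{\p}(u_i)-\dep_{\p}(v_i)$ steps. So the interesting case is the long-path branch.

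For the long-path branch, I would first verify that the pivot vertices $u'_i, v'_i\in V'$ are well defined and correctly positioned. By the third property of Lemma~\ref{lem:properties_compressed_rooted_tree}, some $j\in\{0,1,\ldots,2t\}$ gives $\p^{(j)}(u_i)\in V'$, so $u'_i$ in step~\ref{sta:find_end_points_subpath_in_Vprime} exists and is an ancestor of $u_i$ with $\dep_{\p}(u_i)-\dep_{\p}(u'_i)\leq 2t$. For $v'_i$, the doubling walk over $g_t,g_{t-1},\ldots,g_0$ starting at $u'_i$ produces an ancestor of $u'_i$ in $\p'$ (by the second property of Lemma~\ref{lem:properties_compressed_rooted_tree}, powers of $\p'$ are powers of $\p^{(t)}$) whose depth in $\p$ lies in $[\dep_{\p}(v_i),\dep_{\p}(v_i)+2t)$; this is a standard binary-search argument, and it also shows $v_i$ is an ancestor of $v'_i$ in $\p$. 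Consequently $u_i,u'_i,v'_i,v_i$ form four vertices along a single root-to-$u_i$ path, with $u_i$ deepest and $v_i$ shallowest.

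Next I would analyze the sequence produced in step~\ref{sta:find_path_pivots} and~\ref{sta:initialization_of_path}. Since $v'_i$ is an ancestor of $u'_i$ in $\p'$, generating $P'(u'_i,v'_i)$ in $\p'$ reduces to the short-path case applied to $\p'$ (whose depth is at most $\lceil d/t\rceil$), so it outputs the sequence $(u'_i,\p'(u'_i),\p'^{(2)}(u'_i),\ldots,v'_i)$. By property~2 of Lemma~\ref{lem:properties_compressed_rooted_tree}, this equals $(u'_i,\p^{(t)}(u'_i),\p^{(2t)}(u'_i),\ldots,v'_i)$ when interpreted as vertices of $\p$. After concatenating $(u_i)$ and $(v_i)$, the sequence $A$ has the property that each consecutive pair $(a_k,a_{k+1})$ satisfies: $a_{k+1}$ is an ancestor of $a_k$ in $\p$, and the gap $\dep_{\p}(a_k)-\dep_{\p}(a_{k+1})$ is at most $2t$ (the initial and final gaps, by choice of $u'_i,v'_i$) or exactly $t$ (the interior gaps from the compressed path).

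Finally I would handle the expansion loop of step~\ref{sta:final_path}. I claim two things: (i) the invariant that every consecutive pair $(a_k,a_{k+1})$ is an ancestor pair in $\p$ is preserved by each insertion, because inserting $\p(a_k)$ between $a_k$ and $a_{k+1}$ produces the pairs $(a_k,\p(a_k))$ (a parent pair) and $(\p(a_k),a_{k+1})$ (still an ancestor pair since $a_{k+1}$ is a strict ancestor of $a_k$, hence an ancestor of $\p(a_k)$); and (ii) the maximum gap in the sequence halves every round at worst (actually decreases by one), so after $O(t)$ iterations every consecutive pair is a parent pair, at which point $A$ is stable and equals the full path from $u_i$ to $v_i$ in $\p$. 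This final sequence is $P(u_i,v_i)$ by definition, completing the proof. The main obstacle I anticipate is bookkeeping: carefully verifying that the doubling search for $v'_i$ indeed lands in the claimed depth range and that the invariant in (i) survives simultaneous insertions between every consecutive pair, rather than any conceptual difficulty.
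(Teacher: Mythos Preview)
Your proposal is correct and follows essentially the same approach as the paper: treat the short-path branch directly, and for the long-path branch use Lemma~\ref{lem:properties_compressed_rooted_tree} to locate $u'_i,v'_i\in V'$ on the $u_i$--$v_i$ path, observe that the concatenated sequence in step~\ref{sta:initialization_of_path} consists of ancestors of $u_i$ with consecutive depth gaps bounded by $2t$, and then argue that the expansion loop fills in the missing vertices. Your argument is in fact more careful than the paper's, which simply asserts that step~\ref{sta:final_path} yields $(u_i,\p^{(1)}(u_i),\ldots,v_i)$ without spelling out the ancestor-pair invariant or the termination bound; one minor quibble is that the gap does not ``halve'' but, as you parenthetically note, decreases by exactly one per round, so the loop terminates in at most $2t$ iterations.
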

\begin{proof}
Consider a pair $(u_i,v_i)\in Q$. If $\dep_{\p}(u_i)-\dep_{\p}(v_i)\leq 2t$, then $P_i$ will be the path from $u_i$ to $v_i$ in $\p$ by step~\ref{sta:generate_path_directly}.

We only need to consider the case when $\dep_{\p}(u_i)>\dep_{\p}(v_i)+2t$. 
According to Lemma~\ref{lem:properties_compressed_rooted_tree}, $\exists j\in[2t]$ such that $\p^{(j)}(u_i)\in V'$. 
Thus, $u'_i\in V'$ can be found by step~\ref{sta:find_end_points_subpath_in_Vprime}. 
Then $v'_i$ can be found.
$v_i$ is an ancestor of $v'_i$.
$v'_i$ is an ancestor of $u'_i$.
$u'_i$ is an ancestor of $u_i$.
In step~\ref{sta:initialization_of_path}, the initialization of $A$ should be 
$
(u_i,u'_i,\p'^{(1)}(u'_i),\p'^{(2)}(u'_i),\cdots,v'_i,v_i).
$
By Lemma~\ref{lem:properties_compressed_rooted_tree}, the initialization of $A$ is also $(u_i,u'_i,\p^{(t)}(u'_i),\p^{(2t)}(u'_i),\cdots,v'_i,v_i)$.
Then by step~\ref{sta:final_path}, the final sequence $P_i=A$ will be $(u_i,\p^{(1)}(u_i),\p^{(2)}(u_i),\cdots,v_i)$ which denotes the path from $u_i$ to $v_i$ in $\p$.
\end{proof}

\subsection{Implementation of the DFS Sequence Algorithm in MPC}\label{sec:implement_DFS}
Here, we discuss how to implement the subroutines mentioned in Section~\ref{sec:compressed_rooted_tree},~\ref{sec:LCA},~\ref{sec:path_generation} in the MPC model.
See section~\ref{sec:data_organize} for the organization of the data in the MPC model and basic MPC operations.

\noindent\textbf{Compressed rooted tree.} Consider the implementation of \textsc{Compress}$(\p:V\rightarrow V)$ (Section~\ref{sec:compressed_rooted_tree}) in the MPC model. The input size is $|V|=n$. In the first step, we need to compute the depth of every vertex in $\p$. As shown by~\cite{asswz18}, this can be computed in the MPC model with $O(n)$ total space and $\Theta(n^{\delta})$ local memory size per machine for any constant $\delta\in(0,1)$ in $O(\log(\dep(\p)))$ time. In the next step, $V'$ can be computed in $O(1)$ time. Finally, we can simultaneously compute $\p'(v)$ for every vertex $v\in V'$. Since $\p'(v)=\p^{(t)}(v)$ for $t=\lceil\log(\dep(\p))\rceil$, it takes $O(t)=O(\log(\dep(\p)))$ time. Therefore, \textsc{Compress}$(\p)$ can be implemented in the $(0,\delta)$-MPC model for any constant $\delta\in(0,1)$ in $O(\log(\dep(\p)))$ time.

\noindent\textbf{Least common ancestor.} Consider the implementation of \textsc{LCA}$(\p:V\rightarrow V,Q)$ (Section~\ref{sec:LCA}) in the MPC model. 
The input size is $|V|+|Q|=n+q$. 
The first step computes a compressed rooted tree $\p':V'\rightarrow V'$. As discussed in the previous paragraph, this only requires $O(n)$ total space and $\Theta(n^{\delta})$ local memory per machine for any constant $\delta\in(0,1)$. Before the next step, we need to compute the depth of each vertex in $\p$ and the depth of each vertex in $\p'$. Since $\dep(\p')\leq \dep(\p)$, it takes $O(\log(\dep(\p)))$ time. 
In step~\ref{sta:find_sparse_table}, as shown in~\cite{asswz18}, $g_0(\cdot)\equiv\p'^{(2^0)}(\cdot),g_1\equiv\p'^{(2^1)}(\cdot),\cdots,g_t\equiv\p'^{(2^t)}(\cdot):V'\rightarrow V'$ for $t=\lceil\log(\dep(\p))\rceil$ can be computed in the MPC model with $O(|V'|\log(\dep(\p')))$ total space and $O(|V'|^{\delta})$ local memory per machine for any constant $\delta\in(0,1)$ in $O(\log(\dep(\p')))=O(\log(\dep(\p)))$ time.
According to Lemma~\ref{lem:properties_compressed_rooted_tree}, $|V'|\leq |V|/\log(\dep(\p))$. 
Thus, step~\ref{sta:find_sparse_table} only needs $O(n)$ total space and takes time $O(\log(\dep(\p)))$.
For step~\ref{sta:handle_each_lca_query}, we can handle all the queries in $Q$ simultaneously.
For step~\ref{sta:find_a_close_u}, we can use $O(1)$ time to check whether $\dep_{\p}(u_i)>\dep_{\p}(v_i)+2t$. If it is true, we can use $O(t)=O(\log(\dep(\p)))$ time to find a $j\in\{0,1,\cdots,2t\}$ such that $\p^{(j)}(u_i)\in V'$.
Then, we apply an exponential search by using $g_0,g_1,\cdots,g_t$ to find $\hat{u}_i$. 
This takes $O(t)=O(\log(\dep(\p)))$ time.
Step~\ref{sta:check_directly} checks whether $\p^{(j)}(\hat{u}_i)$ is the LCA for every $j\in[4t]$. 
Thus, it takes $O(t)=O(\log(\dep(\p)))$ time.
In step~\ref{sta:find_vertex_in_Vprime}, according to Lemma~\ref{lem:properties_compressed_rooted_tree}, there exists $j\in\{0,1,2,\cdots,2t\}$ such that $\p^{(j)}(\hat{u}_i)\in V'$. Thus, we only need time $O(t)$ to find $u'_i$. Similarly, we only need time $O(t)$ to find $v'_i$.
In step~\ref{sta:find_uv_double_prime},
by~\cite{asswz18}, the LCA of each $(u'_i,v'_i)$ in $\p'$ can be computed simultaneously in the MPC model with $O(|V'|\log |V'| + |Q|)=O(n)$ total space in $O(\log(\dep(\p')))=O(\log(\dep(\p)))$ time.
The last step checks whether $\p^{(j)}(u''_i)=\p^{(j)}(v''_i)$ for each $j\in[2t]$. Thus it requires $O(t)=O(\log(\dep(\p)))$ time.
To conclude, \textsc{LCA}$(\p:V\rightarrow V,Q)$ can be implemented in the $(0,\delta)$-MPC model for any constant $\delta\in(0,1)$ in $O(\log(\dep(\p)))$ parallel time.

\noindent\textbf{Multiple paths generation.} Consider the implementation of \textsc{MultiPaths}$(\p:V\rightarrow V,Q)$ (Section~\ref{sec:path_generation}) in the MPC model. The first two steps are the same as the first two in the LCA subroutine mentioned in the previous paragraph. 
They can be implemented in the MPC model with $O(|V|)=O(n)$ total space and $\Theta(n^{\delta})$ local memory per machine for any constant $\delta\in(0,1)$ in $O(\log(\dep(\p)))$ time.
We compute the depth of each vertex in $\p$ and the depth of each vertex in $\p'$ in $O(\log(\dep(\p)))$ time before the next step. 
In step~\ref{sta:multiple_path_queries}, all the queries $(u_i,v_i)\in Q$ can be handled simultaneously.
In step~\ref{sta:generate_path_directly}, if $\dep_{\p}(u_i)\leq \dep_{\p}(v_i)+ 2t$, the length of the path from $u_i$ to $v_i$ is at most $2t$, and thus $P(u_i,v_i)$ can be computed in $O(t)=O(\log(\dep(\p)))$ time.
In step~\ref{sta:find_end_points_subpath_in_Vprime}, we can use $O(t)=O(\log(\dep(\p)))$ time to find the minimum $j\in[2t]$ such that $\p^{(j)}(u_i)\in V'$. 
Then we can apply exponential search to find $v'_i$ by using $g_0,g_1,\cdots,g_t$ in $O(t)=O(\log(\dep(\p)))$ time.
In step~\ref{sta:find_path_pivots}, by~\cite{asswz18}, each path $P'(u'_i,v'_i)$ in $\p'$ can be generated simultaneously in the MPC model with $O(|V'|\log |V'| + \sum_{i\in [q]}|P'(u'_i,v'_i)|)=O(n+\sum_{i\in [q]}|P(u_i,v_i)|)$ total space in $O(\log(\dep(\p')))=O(\log(\dep(\p)))$ time.
Consider the initialization of $A=(a_1,a_2,\cdots,a_h)$ in step~\ref{sta:initialization_of_path}. 
$a_1$ should be $u_i$ and $a_h$ should be $v_i$.
By Lemma~\ref{lem:properties_compressed_rooted_tree}, $\forall j\in[h-1]$, $\dep(a_j)-\dep(a_{j+1})\leq 2t$.
Thus, the number of repetitions in the final step is at most $O(t)=O(\log(\dep(\p)))$.
To conclude, \textsc{MultiPaths}$(\p:V\rightarrow V,Q=\{(u_1,v_1),(u_2,v_2),\cdots,(u_q,v_q)\})$ can be implemented in the MPC model with total space linear in $O(|V|+\sum_{i\in [q]} |P(u_i,v_i)|)$ and local memory size $\Theta(|V|^{\delta})$ per machine for any constant $\delta\in(0,1)$ in $O(\log(\dep(\p)))$  time.

\noindent\textbf{DFS sequence in the MPC model.} Consider \textsc{LeafSampling}$(n^{\delta},\p:V\rightarrow V)$ where $n=|V|$ and $\delta$ is an arbitrary constant from $(0,1)$. For step~\ref{sta:finding_LCA} of \textsc{LeafSampling$(n^{\delta},\p)$}, we run our LCA (Section~\ref{sec:LCA}) algorithm.
The correctness of our LCA algorithm is guaranteed by Lemma~\ref{lem:correctness_of_LCA}.
 According to~\cite{asswz18}, the total number of queries generated in step~\ref{sta:finding_LCA} of \textsc{LeafSampling$(n^{\delta},\p)$} is at most $O(n^{\delta})$ with high probability. Then due to the discussion in the previous paragraphs, the step~\ref{sta:finding_LCA} of \textsc{LeafSampling$(n^{\delta},\p)$} can be implemented in the $(0,\delta)$-MPC model for any constant $\delta\in(0,1)$ in $O(\log(\dep(\p)))$ time.
 For step~\ref{sta:finding_paths} of \textsc{LeafSampling$(n^{\delta},\p)$}, we run our multiple paths generation (Section~\ref{sec:path_generation}) algorithm.
 The correctness of our multiple paths generation algorithm is guaranteed by Lemma~\ref{lem:correctness_of_multipath}. 
 Notice that the total length of all the queried paths in the step~\ref{sta:finding_paths} of \textsc{LeafSampling$(n^{\delta},\p)$} is at most the length of the DFS sequence which is $O(n)$.
 According to the discussion in the previous paragraphs, the step~\ref{sta:finding_paths} of  \textsc{LeafSampling$(n^{\delta},\p)$} can be implemented in the $(0,\delta)$-MPC model for any constant $\delta\in(0,1)$ in $O(\log(\dep(\p)))$ time. Together with Theorem~\ref{thm:previous_leaf_sampling}, we conclude Theorem~\ref{thm:DFS_sequence}.

\section{$2$-Edge Connectivity and Biconnectivity in MPC}\label{sec:mpc_biconnectivity}
In this section, we will discuss how to implement the $2$-edge connectivity algorithm and the biconnectivity algorithm in the MPC model.
Let us firstly introduce how to implement an subroutine called range minimum query (RMQ) in the MPC model. 

\subsection{Parallel Range Minimum Query in Linear Total Space}\label{sec:rmq}
The range minimum query (RMQ) problem is as the following. 
Given a sequence $A=(a_1,a_2,\cdots,a_n)$ and a set of queries $Q=\{(l_1,r_1),(l_2,r_2),\cdots,(l_q,r_q)\}$ where $\forall i\in [q], l_i\leq r_i\in [n]$, we want to find the value $\min_{l_i\leq j\leq r_i} a_j$ for each query $(l_i,r_i)\in Q$. 
\cite{asswz18} shows an MPC algorithm which requires total space $O(n\log n+q)$ and takes $O(1)$ parallel time for solving the RMQ problem. 
Their space is not linear in the input size. 
In this section, we show that if every query $(l_i,r_i)\in Q$ satisfies $r_i-l_i\geq 2\lceil\log n\rceil$, then we can solve the such RMQ problem in the MPC model with total space $O(n+q)$ in $O(1)$ parallel time.
The offline description is shown in the algorithm \textsc{RMQ}$(A,Q)$.
\begin{figure}
	\noindent\fbox{
		\begin{minipage}{\linewidth}
			Multiple RMQ Algorithm:
			\begin{itemize}
				\item \textbf{Input}:
				\begin{itemize}
					
					\item An sequence $A=(a_1,a_2,\cdots,a_n)\in \mathbb{Z}^n$ and a set $Q=\{(l_1,r_1),(l_2,r_2),\cdots,(l_q,r_q)\}$, where $\forall i\in[q],l_i,r_i\in[n], l_i+\lceil\log n\rceil\leq r_i$.
					
				\end{itemize}
				
				\item \textbf{Output}:
				\begin{itemize}
					
					\item $\rmq:Q\rightarrow \mathbb{Z}$.
					
				\end{itemize}
				
				\item \textbf{Finding the minimum value in queried ranges} (\textsc{RMQ}$(A,Q)$ ): \label{sta:rmq_qureis_in_parallel}
				\begin{enumerate}
					\item Set $t\gets \lceil\log n\rceil$. Set $A'\gets (a_1',a_2',\cdots,a_{\lceil n/t\rceil}')$, where 
					$$
					\forall i\in[\lceil n/t\rceil],a'_i\gets \min_{j\in[n]:(i-1)\cdot t< j\leq i\cdot t} a_j.
					$$\label{sta:compressed_sequence}
					\item Initialize $\lef:[n]\rightarrow \mathbb{Z},\rig:[n]\rightarrow \mathbb{Z}$. 
					For each $i\in [n]$, find $j\in [\lceil n/t\rceil]$ such that $i\in((j-1)t,jt]$. Set $\lef(i) \gets \min_{k\in[n]\cap((j-1)t,i]} a_k,\rig(i)\gets\min_{k\in[n]\cap[i,jt]} a_k$.
					\item For each $(l_i,r_i)\in Q$:\label{sta:rmq_qureis_in_parallel}
					\begin{enumerate}
						\item Find the smallest $l_i'\geq l_i$ with $l_i'\mod t=0$ and find the largest $r_i'\leq r_i$ with $r_i'\mod t =0$.\label{sta:inner_interval}
						\item If $l_i'=r_i'$, set $m_i\gets \infty$; otherwise $m_i\gets \min_{l_i'/t + 1\leq j\leq r_i'/t} a'_j$. \label{sta:find_in_middle}.
						\item Set $\rmq((l_i,r_i))\gets \min(\rig(l_i),m_i,\lef(r_i))$.\label{sta:rmq_output}
					\end{enumerate}
				\end{enumerate}
			\end{itemize}
		\end{minipage}
	}
\end{figure}
\begin{lemma}[Range minimum query]\label{lem:rmq}
Let $A=(a_1,a_2,\cdots,a_n)\in \mathbb{Z}^n$ be a sequence of $n$ numbers and $Q=\{(l_1,r_1),(l_2,r_2),\cdots,(l_q,r_q)\}$ where $\forall i\in[q],l_i,r_i\in [n],l_i+\lceil\log n\rceil \leq r_i$. 
Let $\rmq:Q\rightarrow \mathbb{Z}$ be the output of \textsc{RMQ}$(A,Q)$. Then $\forall (l_i,r_i)\in Q$, $\rmq((l_i,r_i)) = \min_{j\in [n]\cap [l_i,r_i]} a_j.$ In addition, $\textsc{RMQ}$ can be implemented in the $(0,\delta)$-MPC model for any constant $\delta\in(0,1)$ in $O(1)$ parallel time. 
\end{lemma}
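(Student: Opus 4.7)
The approach is a standard two-level blocking decomposition on $[n]$ into $\lceil n/t\rceil$ blocks of size $t = \lceil \log n\rceil$. For any query $(l_i,r_i)$ with $r_i \geq l_i + t$, the range decomposes as the disjoint union $[l_i,l_i'] \sqcup (l_i',r_i'] \sqcup (r_i',r_i]$: a suffix of the block containing $l_i$, a contiguous run of whole blocks, and a prefix of the block containing $r_i$. The algorithm precomputes $\rig(l_i)$, $m_i$, and $\lef(r_i)$ as the minima of these three sub-ranges (with $m_i=\infty$ exactly when the middle is empty), and returns their minimum in Step~\ref{sta:rmq_output}. For the decomposition to be well-defined we need $l_i' \leq r_i'$; this follows from $r_i \geq l_i + t$, since $\lfloor r_i/t\rfloor \geq \lfloor l_i/t\rfloor + 1 \geq \lceil l_i/t\rceil$. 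Correctness of the returned minimum is then immediate from the disjoint decomposition, noting the small edge case that when $r_i$ is itself a multiple of $t$ the interval covered by $\lef(r_i)$ is a subset of the middle block run and is thus harmlessly redundant.

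\textbf{MPC implementation.} Every step is either block-local or per-query parallel, so the overall round count is $O(1)$ in the $(0,\delta)$-MPC model. Computing $A'$, $\lef$, and $\rig$ are local aggregations over length-$t$ blocks: sort records by block index using Theorem~\ref{thm:MPC_sorting}, route each block to a single machine (its $\Theta(\log n)$ entries fit in $n^\delta$ local memory), and compute the within-block minimum, prefix min, and suffix min in one round, for total space $O(n)$. Step~\ref{sta:find_in_middle} requires an inner RMQ over the compressed sequence $A'$ of length $\lceil n/t\rceil = O(n/\log n)$; here I invoke the $O(m\log m)$-space, $O(1)$-round RMQ primitive of~\cite{asswz18} on $A'$, which uses $O((n/\log n)\cdot\log n) = O(n)$ total space and resolves all $q$ inner queries simultaneously in $O(1)$ rounds. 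Steps~\ref{sta:inner_interval} and~\ref{sta:rmq_output} are per-query constant-time operations contributing $O(q)$ additional space.

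\textbf{Main obstacle.} The conceptual difficulty is that a direct one-level sparse-table RMQ costs $\Theta(n\log n)$ space, violating the linear-space requirement. The blocking trick is precisely what cancels this: paying $O(n)$ for the within-block prefix/suffix minima shrinks the instance on which the non-linear-space primitive must be invoked to length $O(n/\log n)$, so the $\log$-factor blow-up of the inner sparse table is exactly absorbed by the $\log$-factor compression of the outer blocks. The hypothesis $r_i - l_i \geq \lceil \log n\rceil$ is essential for this to work: without it a query could lie entirely inside a single block of $A$, and in that case neither $\lef(r_i)$, $\rig(l_i)$, nor $m_i$ would cover the requested range.
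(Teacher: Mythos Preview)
Your proposal is correct and follows essentially the same approach as the paper: the same two-level blocking with $t=\lceil\log n\rceil$, the same decomposition of each query into $\rig(l_i)$, a middle run of whole blocks, and $\lef(r_i)$, and the same invocation of the $O(m\log m+q)$-space RMQ of~\cite{asswz18} on the compressed sequence $A'$ to absorb the $\log$ factor. Your explicit discussion of the edge case when $r_i$ is a multiple of $t$ and of why the hypothesis $r_i-l_i\ge\lceil\log n\rceil$ is needed goes slightly beyond what the paper spells out, but the argument is the same.
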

\begin{proof}
Firstly, let us consider the correctness of \textsc{RMQ}$(A,Q)$. 
Let $t=\lceil\log n\rceil$.
For a query $(l_i,r_i)\in Q$, since $l_i+t\leq r_i$, the $l'_i,r'_i$ found by the step~\ref{sta:inner_interval} will satisfy $l'_i\leq r'_i$.
If $l'_i=r'_i$, then $m_i=\infty$ and $\rmq((l_i,r_i))=\min(\min_{l_i\leq j\leq l_i'} a_j,\min_{l_i'\leq j\leq r_i}a_j) = \min_{l_i\leq j\leq r_i} a_j $.
Otherwise, by step~\ref{sta:find_in_middle}, $m_i=\min_{l_i'+1\leq j\leq r_i'} a_j$.
By step~\ref{sta:rmq_output}, $\rmq((l_i,r_i))=\min(\min_{l_i\leq j\leq l_i'} a_j,\min_{l_i'+1\leq j\leq r_i'} a_j,\min_{r_i'\leq j\leq r_i}a_j)= \min_{l_i\leq j\leq r_i} a_j$.

Let us analyze the total space required and the parallel time for running \textsc{RMQ}$(A,Q)$ in the MPC model.
According to Theorem~\ref{thm:MPC_sorting}, the sorting takes $O(1)$ time and requires linear total space. 
Notice that $\delta\in(0,1)$ is a constant and each machine has $\Theta(n^{\delta})$ local memory. 
We can sort $a_1,a_2,\cdots,a_n$ by their indexes and $o(n)$ number of duplicates of some elements in $A$ such that $a_{i\cdot n^{\delta}+1},\cdots,a_{(i+1)\cdot n^\delta},a_{(i+1)\cdot n^\delta+1},\cdots,a_{(i+1)\cdot n^\delta+t}$ are on the $i^{\text{th}}$ machine. 
Therefore, the first two steps of \textsc{RMQ}$(A,Q)$ can be implemented in the MPC model with $O(n)$ total space and in time $O(1)$.
For step~\ref{sta:rmq_qureis_in_parallel}, we can handle all the queries $(l_i,r_i)\in Q$ simultaneously.
Step~\ref{sta:inner_interval} only requires local computations.
Step~\ref{sta:find_in_middle} needs to handle at most $|Q|$ RMQ on the sequence $A'$. 
Due to~\cite{asswz18}, this can be implemented in the MPC model with $O(|A'|\log |A'|+|Q|)=O(n+q)$ total space and $O(1)$ parallel time.
Step~\ref{sta:rmq_output} can be done in $O(1)$ time.
To conclude, \textsc{RMQ}$(A,Q)$ can be implemented in the $(0,\delta)$-MPC model for any constant $\delta\in(0,1)$ and the parallel time is $O(1)$.
\end{proof}

\subsection{MPC Implementation of $2$-Edge Connectivity and Biconnectivity}
The input is a connected undirected graph $G=(V,E)$. $G$ has $|V|=n$ vertices and $|E|=m$ edges. 
Thus, the input size is $m+n$.
Consider the $(\gamma,\delta)$-MPC model for $\gamma\in[0,2]$ and an arbitrary constant $\delta\in(0,1)$. 
The total space in the system should be $\Theta(m^{1+\gamma})$ and the local memory size of each machine is $\Theta(m^{\delta})$.
There is an efficient algorithm for solving connected components and spanning tree problem.

\begin{theorem}[\cite{asswz18}]\label{thm:spanning_tree}
For any $\gamma\in[0,2]$ and any constant $\delta\in(0,1)$, there is a randomized $(\gamma,\delta)$-MPC algorithm which outputs the connected components together with a rooted spanning forest of an undirected graph $G$ with $n$ vertices and $m$ edges in $O(\min(\log \diam(G)\cdot \log\frac{\log n}{\log ((n+m)^{1+\gamma}/n)},\log n))$ parallel time. Furthermore, the depth of the spanning forest is at most $\min\left(\diam(G)^{O\left(\log\frac{\log n}{\log ((n+m)^{1+\gamma}/n)}\right)},n\right)$. The success probability is at least $0.98$. If the algorithm fails, then it returns FAIL.
\end{theorem}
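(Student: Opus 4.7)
My plan is to follow a leader-contraction strategy combined with batched pointer doubling, carefully paced against the memory budget. I would initialize a rooted forest in which every vertex is its own singleton root, and then execute an outer loop of $O(\log \diam(G))$ phases. At the end of the loop, two vertices share a root if and only if they lie in the same connected component of $G$, and the forest itself serves as the required rooted spanning forest, with component identifiers readable off the roots.

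Inside a single phase I would first sparsify the contracted ``quotient'' graph on the surviving roots by having each root pick uniformly at random a neighbouring root and propose a merge into it. Breaking symmetries by comparing identifiers yields a new rooted forest whose leaves are the old roots and whose depth is $O(\log n)$ with high probability. I would then flatten this forest by pointer doubling. The doubling can be batched: writing $s := (n+m)^{1+\gamma}/n$ for the per-vertex slack in aggregate space, in each MPC round I can afford to materialise a $2^r$-ancestor table with $r = \Theta(\log s)$, so each batched step multiplies the reach of the doubling by a factor of $s$. Reaching depth $\log n$ therefore costs $O(\log \log n / \log s)$ MPC rounds per phase, giving the announced overall bound $O(\log \diam(G) \cdot \log \log n / \log s)$, capped at $O(\log n)$ in the worst case via a separate naive pointer-jumping schedule.

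For the diameter analysis I would track, after phase $i$, the radius (in the original graph $G$) of the connected region spanned by each surviving root. A standard coupling of the random-mate choices with BFS from each root shows that this radius grows by a constant factor in expectation per phase, and concentration yields high-probability growth; hence after $O(\log \diam(G))$ phases the radius exceeds $\diam(G)$ and every component collapses to a single root. The depth of the returned spanning forest is the product of the per-phase stretches introduced by the batched doubling, which accumulates to $\diam(G)^{O(\log \log n / \log s)}$ over the $O(\log \diam(G))$ phases, matching the bound stated in the theorem.

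The main obstacle is the low-memory regime $\gamma = 0$, $\delta \in (0,1)$, where no single machine holds the full adjacency of any high-degree vertex, so every primitive has to be rebuilt from constant-round sorts and prefix sums via Theorem~\ref{thm:MPC_sorting} while respecting the $\Theta(N^{\delta})$ per-machine bound. Unbiased random-neighbour selection for a root whose incident edges are spread across many machines is the most delicate piece: I would attach an independent uniform $[0,1]$ key to each edge and extract the minimum-key incident edge per root by a constant-round sort-and-combine. The batched doubling itself must be implemented so that the $2^r$-ancestor table of each vertex is stored on a single machine; this requires $r = O(\log s)$ rather than a larger $r$, which is exactly why $\log s$ appears in the denominator. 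Finally, I would union-bound the tail probabilities of the $O(\log \diam(G))$ random-mate and doubling stages to reach the claimed $0.98$ success probability, returning \textbf{FAIL} if any phase's consistency check (an over-deep forest surviving the batched doubling, or a surviving root that fails to merge) fires.
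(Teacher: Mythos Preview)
The paper does not prove this statement at all: Theorem~\ref{thm:spanning_tree} is quoted verbatim from~\cite{asswz18} (note the citation in the theorem header) and is used here purely as a black box to drive the spanning-tree step of \textsc{Bridges} and \textsc{Biconn}. There is therefore no ``paper's own proof'' to compare your proposal against.

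As for the sketch itself, it is in the right genre (iterated leader contraction plus batched pointer jumping paced by the space slack $s=(n+m)^{1+\gamma}/n$), but the central analytic claim is not justified. You assert that after one phase of random-mate the $G$-radius spanned by each surviving root grows by a constant factor, and hence $O(\log\diam(G))$ phases suffice. Plain random-mate does \emph{not} give this: a star rooted at a high-degree vertex can absorb many neighbours in one phase without increasing its radius at all, while a long path only contracts by a constant additive amount per phase unless you do something more aggressive. The algorithm of~\cite{asswz18} does not rely on a single random neighbour pick; it performs, within each phase, a controlled amount of neighbourhood expansion (essentially a bounded BFS/label-propagation whose reach doubles from phase to phase) so that the radius provably doubles. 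Your ``standard coupling of the random-mate choices with BFS'' is exactly the nontrivial part, and as written it is a placeholder rather than an argument. Likewise, the depth bound $\diam(G)^{O(\log\log n/\log s)}$ does not fall out of multiplying ``per-phase stretches introduced by the batched doubling''; it comes from the fact that each phase grafts trees of bounded depth onto one another, and bounding that grafting depth is again tied to the expansion mechanism you have not specified.
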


\noindent\textbf{$2$-Edge connectivity.} In the first step of \textsc{Bridges}$(G)$ (Section~\ref{sec:two_edge_connectivity}), according to Theorem~\ref{thm:spanning_tree}, with probability $0.98$, the rooted spanning tree of $G$ can be computed in the MPC model with total space $O(m^{1+\gamma})$ in $O(\log \diam(G)\cdot \log\log_{m^{1+\gamma}/n }n)$ time, and
the depth of the spanning tree is at most $\diam(G)^{O(\log\log_{m^{1+\gamma}/n }n)}$.
In step~\ref{sta:bridge_set_bac}, to compute $\bac(v)$ for each $v\in V$, we can query the LCA of $(v,w)$ in $\p$ for each edge $(v,w)\in E$. 
We can use our LCA algorithm (Section~\ref{sec:LCA}) as the subroutine for this purpose. It takes the total space $O(m)$ and the running time $O(\log(\dep(\p)))=O(\log \diam(G)\cdot \log\log_{m^{1+\gamma}/n }n)$ (Section~\ref{sec:implement_DFS}).
In step~\ref{sta:two_edge_dfs_sequence}, with probability at least $0.99$, the DFS sequence can be computed using $O(n)$ total space in time $O(\log(\dep(\p)))=O(\log \diam(G)\cdot \log\log_{m^{1+\gamma}/n }n)$ (Theorem~\ref{thm:DFS_sequence}).
In step~\ref{sta:final_output_bridge}, we can use sorting to find the first appearance $a_i$ and the last appearance $a_j$ in the DFS sequence of each vertex $v$, and $\min_{k\in\{i,i+1,\cdots,j\}}\bac(a_k)$ corresponds to a range minimum query.
If the size of the subtree of $v$ is at most $\log n$, the corresponding RMQ can be solved by local computation.
Otherwise, we use our RMQ algorithm (Section~\ref{sec:rmq}) to handle the corresponding RMQ of $v$.
By Lemma~\ref{lem:rmq}, this step only takes $O(1)$ time and requires $O(n)$ space. To conclude, \textsc{Bridges}$(G)$ only takes total space $O(m^{1+\gamma})$ and has parallel time $O(\log \diam(G)\cdot \log\log_{m^{1+\gamma}/n }n)$. 

Since the correctness of \textsc{Bridges}$(G)$ (Section~\ref{sec:two_edge_connectivity}) is guaranteed by Lemma~\ref{lem:two_edge_connect}, we can conclude Theorem~\ref{thm:two_edge_connectivity}.

\noindent\textbf{Biconnectivity.} The first three steps of \textsc{Biconn}$(G)$ (Section~\ref{sec:biconn}) are the same as the first three steps of \textsc{Bridges}$(G)$ (Section~\ref{sec:two_edge_connectivity}). 
Thus, the success probability of the first three steps is at least $0.97$. 
The total space used is at most $O(m^{1+\gamma})$ and the running time is at most $O(\log \diam(G)\cdot \log\log_{m^{1+\gamma}/n }n)$.
Step~\ref{sta:non_tree_edge_ancestor} of \textsc{Biconn}$(G)$ corresponds to the RMQ problem which is almost the same as the step~\ref{sta:final_output_bridge} of \textsc{Bridges}$(G)$.
Thus, it takes $O(n)$ total space and $O(1)$ parallel time.
Step~\ref{sta:non_tree_edge_non_ancestor} requires $m$ LCA queries. 
We can run our LCA algorithm (Section~\ref{sec:LCA}) for this step.
It takes $O(m+n)$ space and $O(\log(\dep(\p)))=O(\log \diam(G)\cdot \log\log_{m^{1+\gamma}/n }n)$ time (Section~\ref{sec:implement_DFS}).
By Lemma~\ref{lem:biconnectivity}, we have $\diam(G')\leq \diam(G)^{O(\log\log_{m^{1+\gamma}/n} n)}\cdot \bidiam(G)$.
According to Theorem~\ref{thm:spanning_tree}, with probability at least $0.98$, the connected components of $G'$ can be computed in step~\ref{sta:connected_components_Gprime}, the total space needed is $O(m^{1+\gamma})$, and the running time is $O(\log\diam(G)\log^2\log_{m^{1+\gamma}/n}n+\log\bidiam(G)\log\log_{m^{1+\gamma}/n}n)$.
To conclude, the total space needed is at most $O(m^{1+\gamma})$, and the parallel running time is $O(\log\diam(G)\log^2\log_{m^{1+\gamma}/n}n+\log\bidiam(G)\log\log_{m^{1+\gamma}/n}n)$.

Since the correctness of \textsc{Biconn}$(G)$ (Section~\ref{sec:biconn}) is guaranteed by Lemma~\ref{lem:biconnectivity}, we can conclude Theorem~\ref{thm:final_biconn}.

\section{Hardness of Biconnectivity in MPC}\label{sec:hardness_of_biconnectivity}
There is a conjectured hardness result which is widely used in the MPC literature~\cite{ksv10,bks13,klmrv14,rvw16,yv18}. 

\begin{conjecture}[One cycle vs. two cycles]\label{con:onevstwo}
For any $\gamma\geq 0$ and any constant $\delta\in(0,1)$, distinguishing the following two graph instances in the $(\gamma,\delta)$-MPC model requires $\Omega(\log n)$ parallel time: 
\begin{enumerate}
	\item a single cycle contains $n$ vertices,
	\item two disjoint cycles, each contains $n/2$ vertices.
\end{enumerate}
\end{conjecture}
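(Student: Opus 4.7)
The plan is to approach the statement as an adversary/indistinguishability argument that tracks how much global information can propagate through the MPC system per round. The combinatorial intuition is simple: in both the one-cycle and the two-cycle instances, every local neighborhood of radius $r$ is a disjoint union of paths of total length $O(r)$, so any machine that has collected only $O(N^{\delta})$ edges can distinguish the two cases only insofar as it has gathered a piece of information spanning roughly half of one of the cycles. The goal is to leverage this locality to prove that $o(\log n)$ rounds cannot suffice.

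Concretely, I would first define an adversarial input distribution placing equal mass on a uniformly random single-cycle instance on $n$ vertices and a uniformly random two-cycle instance with cycles of length $n/2$, together with a uniformly random assignment of edges to the $p$ input machines. For any deterministic $(\gamma,\delta)$-MPC algorithm running in $r$ rounds, I would then define the transcript $T_r$ as the joint contents of every machine's local memory at the end of round $r$, and try to show that for $r \ll \log n$ the distribution of $T_r$ on one-cycle instances is statistically close to its distribution on two-cycle instances; any algorithm with success probability bounded away from $1/2$ would then violate this closeness. The natural progress measure is the length of the longest contiguous cycle-arc that any single machine has consolidated in its memory. Since each machine has only $\Theta(N^{\delta})$ words of memory and a matching per-round communication budget, one would like to argue that this quantity grows by at most a constant factor per round, forcing $\Omega(\log n)$ rounds before any machine can certify a single arc of length $\Theta(n)$ (and hence distinguish the two cases).

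The main obstacle --- and the precise reason the statement is presented as a conjecture rather than a theorem --- is that the MPC model permits unlimited local computation and fully adaptive routing of arbitrary $\Theta(N^{\delta})$-word messages. A machine need not forward the edges it physically holds: it can send algebraic fingerprints, hash-based summaries, or sketches of its data, and an aggregator can combine such summaries to extract structural information about edges no machine has ever touched. Consequently the progress measure above cannot literally be taken to be a BFS ball around each machine's initial input, and any proof must rule out such nontrivial encodings. I would attempt this through a reduction to a two-party pointer-chasing variant in which Alice and Bob hold disjoint halves of the permutation defining the cycle(s) and must decide whether their composition has one orbit or two; a round-robust communication lower bound for this task, adapted to $\Theta(N^{\delta})$-bit messages and total space $\Theta(N^{1+\gamma})$, would translate into the desired MPC lower bound. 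In the absence of such a bound --- existing techniques either degrade in the sparse regime or apply only to restricted MPC variants such as linear sketching or component-stable algorithms --- the statement remains open, which is precisely why the subsequent hardness results of this paper (including the $\Omega(\log D')$ biconnectivity lower bound in Section~\ref{sec:hardness_of_biconnectivity}) are stated only conditionally on it.
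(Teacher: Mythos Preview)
The statement is a \emph{conjecture}, and the paper provides no proof of it: it is introduced solely as a widely used hardness assumption, and the subsequent biconnectivity lower bound is explicitly conditioned on it. There is therefore nothing to compare your proposal against.

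Your write-up correctly recognizes this. You sketch a plausible indistinguishability strategy (track the longest contiguous arc any machine has assembled, argue it grows by a constant factor per round), identify the genuine obstacle (unrestricted local computation and adaptive routing of arbitrary $\Theta(N^{\delta})$-word messages mean machines can send sketches rather than raw edges, so a naive locality argument fails), and then concede that current techniques only yield the bound for restricted models. That is an accurate summary of the state of affairs, but it is not a proof, and you should not present it as one. If the assignment asked you to prove the statement, the right answer is simply that it is open; if it asked you to discuss the statement, your exposition is reasonable.
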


Under the above conjecture, we show that $\Omega(\log \bidiam(G))$ parallel time is necessary to compute the biconnected components of $G$. This claim is true even for the constant diameter graph $G$, i.e., $\diam(G)=O(1)$.

\begin{theorem}[Hardness of biconnectivity in MPC]
For any $\gamma\geq 0$ and any constant $\delta\in(0,1)$, unless the one cycle vs. two cycles conjecture (Conjecture~\ref{con:onevstwo}) is false, any $(\gamma,\delta)$-MPC algorithm requires $\Omega( \log \bidiam(G))$ parallel time for testing whether a graph $G$ with a constant diameter is biconnected.  
\end{theorem}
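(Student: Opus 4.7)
The plan is to reduce the one cycle vs.\ two cycles distinguishing problem of Conjecture~\ref{con:onevstwo} to biconnectivity testing on constant-diameter graphs, following the sketch in the ``Our Techniques'' subsection. Suppose, for contradiction, that there is a $(\gamma,\delta)$-MPC algorithm $\mathcal{A}$ which decides whether any constant-diameter input graph $H$ is biconnected in $o(\log \bidiam(H))$ parallel rounds. I will use $\mathcal{A}$ as a black box to distinguish the two cycle instances in $o(\log n)$ rounds, contradicting the conjecture.

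The reduction is as follows. Given the cycles-instance $G$ on $n$ vertices, build $G'$ by adding a new apex vertex $v^{*}$ (with a pre-agreed identifier) and an edge $(v^{*},v)$ for every $v\in V(G)$. Because each machine can independently produce the apex-edges for the vertices it already stores, this construction takes $O(1)$ MPC rounds and blows up the input size by only a constant factor, so the model parameters $(\gamma,\delta)$ are preserved. I then need to verify three structural facts about $G'$: (i) $\diam(G')\le 2$, which is immediate since every pair of vertices is joined through $v^{*}$; (ii) $G'$ is biconnected iff $G$ is a single cycle, because in the two-cycle case $v^{*}$ is clearly a cut vertex, while in the single-cycle case removing $v^{*}$ leaves the cycle intact and removing any cycle vertex still keeps the remainder connected through $v^{*}$; and (iii) $\bidiam(G')=\Theta(n)$.

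The main technical step is (iii), specifically the lower bound $\bidiam(G')=\Omega(n)$. I would choose two antipodal vertices $u,v$ of some cycle component of $G$ (so their cyclic distance in $G$ is $\ge n/2$) and show that every simple cycle of $G'$ containing both $u$ and $v$ has length $\Omega(n)$. Any such cycle either avoids $v^{*}$, in which case it lies entirely in the original component of $G$ and must be that whole cycle of length $\ge n/2$; or it visits $v^{*}$ exactly once, in which case it has the form $v_i\!-\!v^{*}\!-\!v_j\!-\!P\!-\!v_i$ for some simple path $P$ in $G$ containing $u,v$, forcing $|P|\ge n/2$ by the antipodal choice and giving total length $\ge n/2+2$. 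Combined with the trivial upper bound $\bidiam(G')\le O(n)$, this establishes (iii).

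Finally, running $\mathcal{A}$ on $G'$ is allowed by (i) and finishes in $o(\log \bidiam(G'))=o(\log n)$ rounds by (iii), while its output distinguishes the one-cycle from the two-cycle case by (ii). This contradicts Conjecture~\ref{con:onevstwo} and proves the theorem. The only nontrivial step, and hence the main obstacle, is the careful case analysis for (iii): one must rule out short shortcuts through $v^{*}$ by exploiting that a simple cycle visits $v^{*}$ at most once. The remaining MPC bookkeeping is routine.
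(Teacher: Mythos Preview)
Your proposal is correct and follows essentially the same apex-vertex reduction as the paper's proof; the paper simply asserts $\bidiam(G')=\Theta(n)$ without the case analysis you supply. One minor slip: in the two-cycle case the relevant cycle has length $n/2$, so your ``$\ge n/2$'' bounds should read $\ge n/4$, but this does not affect the $\Omega(n)$ conclusion.
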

\begin{proof}
For $\gamma\geq 0$ and an arbitrary constant $\delta\in(0,1)$, suppose there is a $(\gamma,\delta)$-MPC algorithm $\mathcal{A}$ which can determine whether an arbitrary constant diameter graph $G$ is biconnected in $o(\log \bidiam(G))$ parallel time. Then we give a $(\gamma,\delta)$-MPC algorithm for solving one cycle vs. two cycles problem as the following:
\begin{enumerate}
	\item For a one cycle vs. two cycles instance $n$-vertex graph $G'=(V',E')$, construct a new graph $G=(V,E)$: $V=V'\cup\{v^*\},E=E'\cup\{(v,v^*)\mid v\in V'\}$.
	\item Run $\mathcal{A}$ on $G$. If $G$ is not biconnected, $G'$ contains two cycles. Otherwise $G'$ is a single cycle.
\end{enumerate}
It is easy to see that the diameter of $G$ is $2$.
If $G'$ is a single cycle, then $G$ is biconnected and $\bidiam(G)=\Theta(n)$. If $G'$ contains two cycles, then $G$ contains two biconnected components and $\bidiam(G)=\Theta(n)$.

The first step of the above algorithm takes $O(1)$ parallel time and only requires linear total space. The graph $G$ has $n+1$ vertices and $2n$ edges.
Thus, the above algorithm is also a $(\gamma,\delta)$-MPC algorithm.  The parallel time of the above algorithm is the same as the time needed for running $\mathcal{A}$ on $G$ which is $o(\log\bidiam(G))=o(\log n)$. 
Thus the existence of the algorithm $\mathcal{A}$ implies that the one cycle vs. two cycles conjecture (Conjecture~\ref{con:onevstwo}) is false. 
\end{proof}
\bibliography{ref}

\end{document}